\documentclass[aps,pra,twocolumn,superscriptaddress,nofootinbib]{revtex4-2}
\usepackage{amsmath,amssymb,amsthm}
\usepackage{graphicx}
\usepackage{booktabs}
\usepackage{hyperref}

\newtheorem{theorem}{Theorem}
\newtheorem{lemma}{Lemma}
\newtheorem{conjecture}{Conjecture}

\begin{document}

\title{Exact minimum measurement dependence for faithful local deterministic models of multipartite GHZ--Mermin correlations}

\author{Aaron Alai}
\affiliation{Baltimore, Maryland, USA}

\date{\today}

\begin{abstract}
Bell derivations rest on locality, determinism, and measurement independence. Hall [Phys.\ Rev.\ Lett.\ \textbf{105}, 250404 (2010)] priced the third assumption exactly for the singlet state, and in the Kochen--Specker analysis of Phys.\ Rev.\ A \textbf{84}, 022102 (2011) priced the four tripartite Mermin perfect correlators at a surrendered fraction of $1/3$, leaving open the problem of an optimal model for the Mermin state itself. This paper solves the faithful version of that problem---every full correlator reproduced and every proper-subset marginal vanishing---and extends it to thirteen parties. A reduction theorem shows the faithfulness constraints are free, so Hall's correlator-only threshold is promoted to the faithful value, $F(3)=1/3$; linear-programming optima, certified exactly by an integer-arithmetic squeeze between a proven lower bound and an explicit construction, then give $F(5)=2/5$, $F(7)=4/9$, $F(9)=8/17$, $F(11)=16/33$, and $F(13)=32/65$, each value through $n=11$ repeated at the following even size. All computed points obey the closed law $F=R/[2(R+1)]$ with $R=2^{\lfloor (n-1)/2\rfloor}$ the Mermin violation ratio, a proven combinatorial lower bound is tight on every computed core, and a universal ceiling $F\le 1/2$ shows the statistics never require total abandonment of measurement independence at any size. The optimal hidden-variable densities have a closed physical form: uniform measures on the contextual ground states of the prepared state's frustrated stabilizer Hamiltonian, a structure confirmed out of sample on cluster states in three entanglement classes. The floors constitute counterfeiting thresholds for multipartite device-independent certificates and an exact demand curve that any measurement-dependent account of quantum correlations must fund. Complete proofs of all theorems are given in the main text and appendices.
\end{abstract}

\maketitle

\section{Introduction}

Bell inequalities follow from three assumptions: locality, determinism (or outcome definiteness in the relevant formulations), and measurement independence---the statistical independence of setting choices from the hidden state of the measured system \cite{Bell1964}. Experiments violate the inequalities, so at least one assumption must yield. The first two exits are heavily populated; the third---lawful, partial measurement dependence, as distinguished from initial-condition conspiracy---has a smaller but serious quantitative literature: Brans's existence proof \cite{Brans1988}; Hall's exact bipartite pricing \cite{Hall2010} and its per-observer refinement \cite{Friedman2019}, with the Hall--Branciard synthesis \cite{HallBranciard2020} and the Kochen--Specker extension \cite{Hall2011}; the Barrett--Gisin mutual-information bound \cite{BarrettGisin2011}; the causal-discovery fine-tuning analysis of Wood and Spekkens \cite{WoodSpekkens2015}; the taxonomy of Waegell and McQueen distinguishing conspiratorial from nomic dependence \cite{WaegellMcQueen}; and explicit model constructions \cite{SenValentini2020}.

For the singlet state, Hall proved that a local deterministic model must surrender exactly $(\sqrt{2}-1)/3 \approx 13.81\%$ of measurement independence (in the variational-distance measure defined below; the mutual-information bound of Ref.~\cite{BarrettGisin2011} is a distinct result in a distinct measure). In the multipartite direction, Hall \cite{Hall2011} constructed a model reproducing the four perfect correlators of the tripartite Mermin scenario at a surrendered fraction of $1/3$, with no constraints on marginals, and explicitly posed the construction of ``a similar optimal model for Mermin's state'' as an open problem. Related tripartite results bound the dependence sufficient to reach relaxed inequality values \cite{Hossain2021,Roy2013}, and recent work prices task-level benchmark dependence for nonlocal games \cite{Pal2026}; none computes the minimum overall variational dependence for full faithful reproduction of the statistics---every correlator and every marginal---which is this paper's object. The approaches meet consistently at the boundary: the exact tripartite optimum computed here, allocated symmetrically (one-sided parameters $M_1 = M_2 = M_3 = 2/3$), saturates the necessary condition $2M_i + M_j = 2$ of Ref.~\cite{Hossain2021} with equality---independent methods corroborating one another.

I solve that problem exactly. The contributions are: (i) a reduction theorem showing that the marginal (faithfulness) constraints are free, which promotes Hall's correlator-only tripartite threshold to the faithful value $F(3)=1/3$ and collapses the optimization to a tractable class space; (ii) floors through thirteen parties, discovered by linear programming and certified in exact integer arithmetic, following the closed staircase $F = R/[2(R+1)]$ with $R = 2^{\lfloor(n-1)/2\rfloor}$ the Mermin violation ratio \cite{Mermin1990}; (iii) hand proofs at $n=3,4$ and a general combinatorial lower bound tight on every computed homogeneous family; (iv) a universal ceiling $F \le 1/2$, proving that faithful reproduction never requires total measurement dependence at any size; and (v) a closed physical characterization of the optimal hidden-variable densities as uniform measures on contextual ground states of the prepared state's frustrated stabilizer Hamiltonian, confirmed out of sample on cluster states outside the GHZ local-unitary class. Complete proofs of Theorems~\ref{thm:reduction}, \ref{thm:symmetry}, \ref{thm:n4}, and \ref{thm:lowerbound}, and the Fourier-analytic derivation underlying Sec.~\ref{sec:landscape}, are given in Appendices~\ref{app:reduction}--\ref{app:fourier}.

\section{Scenario and definitions}\label{sec:scenario}

For $n \ge 3$ parties sharing the $n$-partite GHZ state $(|0\rangle^{\otimes n} + |1\rangle^{\otimes n})/\sqrt{2}$, each party measures Pauli $X$ or $Y$. The Mermin settings are strings $s \in \{X,Y\}^n$ with an even number of $Y$'s; there are $2^{n-1}$ of them. Quantum mechanics predicts with certainty
\begin{equation}
E(s) = \begin{cases} +1, & \#Y(s) \equiv 0 \pmod 4,\\ -1, & \#Y(s) \equiv 2 \pmod 4,\end{cases}
\label{eq:targets}
\end{equation}
with every proper-subset correlator vanishing.

A local deterministic model assigns each hidden state $\lambda$ pre-set answers $(a_i, b_i) \in \{\pm 1\}^2$ per party---$4^n$ deterministic strategies---and, for each settings string $s$, a probability density $\rho_s(\lambda)$. Following Hall \cite{Hall2010}, the degree of measurement dependence is
\begin{equation}
M := \max_{s,s'} \sum_\lambda \left| \rho_s(\lambda) - \rho_{s'}(\lambda) \right|,
\end{equation}
and the fraction of measurement independence surrendered is $F := M/2 \in [0,1]$. (Ref.~\cite{Hall2011} uses the symbol $F$ for the retained fraction $1 - M/2$; throughout this paper $F$ denotes the surrendered fraction.) A model is \emph{faithful} if it reproduces every full correlator $E(s)$ and every vanishing proper-subset correlator. The floors reported below are minima of $F$ over all faithful models for the stated finite setting sets; richer scenarios containing them can only raise the values.

\section{The linear program}\label{sec:lp}

With variables $\rho_s(\lambda) \ge 0$, the constraints are linear: normalization, full correlators equal to $E(s)$, and all proper-subset correlators equal to zero. The objective---minimize the maximum pairwise $L_1$ distance---is encoded with standard auxiliary variables. The programs were solved numerically in double precision; every Mermin floor reported below is then \emph{certified exactly}, with no dependence on solver tolerance, by an integer-arithmetic squeeze between the proven lower bound of Theorem~\ref{thm:lowerbound} and the explicit construction of Sec.~\ref{sec:landscape} (see Sec.~\ref{sec:methods}).

\section{The reduction theorem}\label{sec:reduction}

\begin{theorem}[Fiber reduction]\label{thm:reduction}
Write $a_i = (-1)^{\alpha_i}$, $b_i = (-1)^{\beta_i}$, $P := \bigoplus_i \alpha_i$, $\gamma_i := \alpha_i \oplus \beta_i$. The linear program is exactly equivalent to the same program on the $2^{n+1}$ classes $(P, \gamma)$, with the marginal constraints deleted.
\end{theorem}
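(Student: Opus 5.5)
The plan is to exhibit two maps between the full program on the $4^n$ deterministic strategies and the reduced program on the $2^{n+1}$ classes $(P,\gamma)$. Both maps will preserve the correlator constraints. Neither will increase the objective $M$. The lift from classes back to strategies will also satisfy every marginal constraint automatically. Together these show the two optima coincide, and that every optimal reduced solution lifts to an optimal faithful model.

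The first step is to express the full correlator through the class labels. For a Mermin setting $s$ with $Y$-set $T\subseteq\{1,\dots,n\}$, the product of outcomes of strategy $\lambda=(\alpha,\beta)$ is
\begin{equation}
\prod_{i\notin T} a_i \prod_{i\in T} b_i=(-1)^{\bigoplus_{i}\alpha_i \oplus \bigoplus_{i\in T}\gamma_i}=(-1)^{P\oplus \bigoplus_{i\in T}\gamma_i}.
\end{equation}
This depends only on the class $(P,\gamma)$. The fiber over a class has exactly $2^{n-1}$ elements: $\alpha$ ranges over strings of parity $P$, and $\beta=\alpha\oplus\gamma$ is then determined. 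So the full-correlator constraints are linear functionals of the pushforward densities $\bar\rho_s(P,\gamma)=\sum_{\lambda\in\text{fiber}}\rho_s(\lambda)$.

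Next comes the downward map. Given any feasible full model, push it forward to classes. Normalization and the full correlators are preserved by the identity above. The triangle inequality gives $\sum_{(P,\gamma)}|\bar\rho_s-\bar\rho_{s'}|\le\sum_\lambda|\rho_s-\rho_{s'}|$, so $M$ does not increase. Hence the reduced optimum (with marginal constraints dropped) is at most the full optimum.

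Then comes the upward map. Given a reduced solution $\bar\rho_s$, set $\rho_s(\lambda)=2^{-(n-1)}\bar\rho_s(P(\lambda),\gamma(\lambda))$, spreading each class mass uniformly over its fiber. The $L_1$ distances are preserved exactly, because every density uses the same uniform kernel. Full correlators are preserved because they are constant on fibers.

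The remaining step, which I expect to be the only real content, is checking that the marginals vanish. Take a proper nonempty subset $S$ and any local settings on $S$, with $Y$-positions $T'\subseteq S$. The correlator is the average of $(-1)^{\bigoplus_{i\in S}\alpha_i\oplus\bigoplus_{i\in T'}\gamma_i}$. Within a fiber, $\gamma$ is fixed and $\alpha$ is uniform on the parity-$P$ coset. Since $\emptyset\neq S\neq\{1,\dots,n\}$, the linear functional $\bigoplus_{i\in S}\alpha_i$ is not a multiple of the total parity. It is therefore balanced on each coset of the parity-$P$ subspace, so the fiber average is $0$ for every class and every $s$.

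The care needed is twofold. First, this must hold for arbitrary subset settings, including settings not extendable to Mermin strings; it does, since nothing in the argument uses the Mermin restriction. Second, it must hold for $|S|=n-1$, where it still works because the complement is nonempty. Combining the two maps gives exact equivalence of optimal values and of feasible sets up to this lift, which is the statement of Theorem~\ref{thm:reduction}.
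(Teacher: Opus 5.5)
Your proposal is correct and follows essentially the paper's route: the uniform lift over each $2^{n-1}$-element fiber, full correlators as class functions, and marginals vanishing because a proper nonempty subset parity $\bigoplus_{i\in S}\alpha_i$ is balanced on every coset of the parity hyperplane. The only difference is cosmetic. In the downward direction you push forward to classes and contract by the triangle inequality directly, whereas the paper first averages over the translation group $H_0$ (its operator $U$) and then takes class marginals, which yields the same class densities.
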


The idea is that every Mermin correlator is the parity $P \oplus \bigoplus_{i \in S}\gamma_i$ for the relevant $Y$-subset $S$, so all constraint functionals depend on $\lambda$ only through its class; each class is a fiber of $2^{n-1}$ strategies, and the fiber-uniform lift annihilates every proper-subset correlator while a group-averaging argument shows the lift costs nothing in the objective. The full proof is given in Appendix~\ref{app:reduction}.

Theorem~\ref{thm:reduction} has two consequences. Conceptually, the faithfulness constraints are \emph{free}: the correlator-only optimum and the full faithful optimum provably coincide, which is why the faithful $F(3)$ computed here equals Hall's correlator-only $1/3$ rather than merely happening to. Practically, $n = 7$ collapses from $16{,}384$ strategies to $256$ classes.

\begin{theorem}[Symmetrization, lossless]\label{thm:symmetry}
The scenario is invariant under party permutations; the feasible set is convex and the objective is convex and permutation-invariant; hence group-averaging any optimum yields a covariant optimum, and restricting to $S_n$-covariant densities---functions of $(|S|, P, |\gamma \cap S|, |\gamma \cap \bar{S}|)$ only, with $S$ the setting's index set---is exact. The variable count drops from $O(4^n)$ to $O(n^3)$.
\end{theorem}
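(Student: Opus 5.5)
The plan is to work inside the reduced program of Theorem~\ref{thm:reduction}, whose variables are densities $\rho_S(P,\gamma)$, $S\subseteq[n]$ with $|S|$ even (the $Y$-positions of the setting), $P\in\{0,1\}$, $\gamma\in\{0,1\}^n$. The first step is to fix the action of $\pi\in S_n$: $S\mapsto\pi(S)$, $\gamma\mapsto\gamma\circ\pi^{-1}$, $P\mapsto P$, and $(\pi\rho)_S(P,\gamma):=\rho_{\pi^{-1}(S)}(P,\gamma\circ\pi)$. I will check that $P$ is invariant because the parity of all $\alpha_i$ does not depend on labels. The target $E(S)$ in Eq.~(\ref{eq:targets}) depends only on $|S|$, and the correlator constraint reads $\sum_{P,\gamma}(-1)^{P\oplus\bigoplus_{i\in S}\gamma_i}\rho_S(P,\gamma)=E(S)$. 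Relabelling the summation index shows that $\pi\rho$ satisfies the constraint at $\pi(S)$ exactly when $\rho$ satisfies it at $S$. Normalization and nonnegativity are preserved trivially, so the feasible set is $S_n$-invariant.

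The second step is convexity and invariance of the objective. The constraints are linear equalities and inequalities, so the feasible set is convex. The map $\rho\mapsto\max_{S,S'}\|\rho_S-\rho_{S'}\|_1$ is a maximum of norms of linear maps, hence convex. It is also invariant: $\|(\pi\rho)_S-(\pi\rho)_{S'}\|_1=\|\rho_{\pi^{-1}S}-\rho_{\pi^{-1}S'}\|_1$, since $\gamma\mapsto\gamma\circ\pi$ is a bijection of the class space, and $(S,S')\mapsto(\pi^{-1}S,\pi^{-1}S')$ permutes the index pairs of the max. For any optimum $\rho^*$, set $\bar\rho=\frac1{n!}\sum_\pi\pi\rho^*$. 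Then $\bar\rho$ is feasible, and Jensen gives $M(\bar\rho)\le M(\rho^*)$. So $\bar\rho$ is optimal, and by construction $\pi\bar\rho=\bar\rho$ for all $\pi$.

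The third step, which I expect to be the only real content, is identifying the covariant densities as functions of $(|S|,P,|\gamma\cap S|,|\gamma\cap\bar S|)$, where $\gamma$ is identified with its support. Covariance says $\bar\rho$ is constant on $S_n$-orbits of triples $(S,P,\gamma)$. I will show that two pairs $(S,\gamma)$ and $(S',\gamma')$ lie in one orbit if and only if the four block sizes $|S\cap\gamma|$, $|S\setminus\gamma|$, $|\gamma\setminus S|$, $|\overline{S\cup\gamma}|$ agree. Necessity holds because permutations preserve intersection sizes. For sufficiency, build $\pi$ by matching the four blocks bijectively. These four sizes are determined by $(|S|,|\gamma\cap S|,|\gamma\cap\bar S|)$ together with $n$, which gives the stated parametrization.

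Counting finishes the proof. There are $O(n)$ values of $|S|$, $2$ values of $P$, and $O(n)$ choices for each of the two intersection sizes, so there are $O(n^3)$ variables, against $2^{n-1}\cdot2^{n+1}=O(4^n)$ in the reduced program. The restricted program has the same optimum, because every covariant density is feasible in the original and an optimal one exists by the averaging step. I will also remark that the max over pairs $(S,S')$ can be evaluated on covariant densities using orbit representatives of pairs, so the objective stays expressible in the reduced variables. That is a bookkeeping point and needs no further argument for exactness.
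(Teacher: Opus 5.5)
Your proposal is correct and follows essentially the same route as the paper's proof. You use the same $S_n$ action on $(S,P,\gamma)$ with feasibility preserved because $E$ depends only on $|S|$, the same convexity and group-averaging argument for a covariant optimum, the same orbit separation by matching blocks bijectively (your four block sizes are equivalent to the paper's $(|S|,|\gamma\cap S|,|\gamma\cap\bar S|)$ at fixed $n$), and the same $O(n^3)$ count.
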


The full proof, including the completeness of the listed invariants, is given in Appendix~\ref{app:symmetry}. With Theorem~\ref{thm:symmetry}, the $n = 9$ program falls from roughly $33$ million variables (including the auxiliary variables of the $L_1$ encoding) to $3{,}673$, and the $n = 13$ floor solves in seconds on commodity hardware.

\section{Exact values}\label{sec:exact}

Table~\ref{tab:floors} collects the exact optima. Every Mermin value is established by an exact squeeze, independent of floating-point tolerance: the proven lower bound of Theorem~\ref{thm:lowerbound}---evaluated with exactly known satisfiability counts, on the full set for odd $n$ and on an embedded Mermin$_{n-1}$ sub-scenario for even $n$---coincides with the exact rational evaluation of the frustration-ansatz construction of Sec.~\ref{sec:landscape}, computed by pure integer counting. The linear program, run in double precision and reproduced on independent hardware with independent solver algorithms (dual simplex and interior point), served as discovery and cross-check; the certification script is part of the verification suite (Sec.~\ref{sec:methods}).

\begin{table}[t]
\caption{\label{tab:floors}Exact minimum surrendered measurement independence $F_{\min}$ for faithful reproduction. $s$ denotes the maximum classical satisfiability of the constraint set. All Mermin values are certified in exact integer arithmetic by the squeeze described in Sec.~\ref{sec:exact}; cluster-core values are LP optima with the Theorem~\ref{thm:lowerbound} bound tight in each case; the CHSH row reproduces Hall's analytic value numerically.}
\begin{ruledtabular}
\begin{tabular}{lccc}
Scenario & $F_{\min}$ & $s$ & $F_{\min}\cdot s$ \\
\colrule
CHSH at Tsirelson ($n=2$) & $(\sqrt{2}-1)/3$ & --- & --- \\
Mermin $n=3$ & $1/3$ & $3/4$ & $1/4$ \\
Mermin $n=4$ & $1/3$ & $3/4$ & $1/4$ \\
Mermin $n=5$ & $2/5$ & $5/8$ & $1/4$ \\
Mermin $n=6$ & $2/5$ & $5/8$ & $1/4$ \\
Mermin $n=7$ & $4/9$ & $9/16$ & $1/4$ \\
Mermin $n=8$ & $4/9$ & $9/16$ & $1/4$ \\
Mermin $n=9$ & $8/17$ & $17/32$ & $1/4$ \\
Mermin $n=10$ & $8/17$ & $17/32$ & $1/4$ \\
Mermin $n=11$ & $16/33$ & $33/64$ & $1/4$ \\
Mermin $n=12$ & $16/33$ & $33/64$ & $1/4$ \\
Mermin $n=13$ & $32/65$ & $65/128$ & $1/4$ \\
$C_4$ linear cluster, AVN core & $1/3$ & $3/4$ & $1/4$ \\
$C_5$ ring cluster, AVN core & $1/3$ & $3/4$ & $1/4$ \\
$C_4$ core $+$ 2 compatible stabilizers & $1/3$ & $5/6$ & $5/18$ \\
$C_6$ ring, minimal 5-element core & $1/4$ & $4/5$ & $1/5$ \\
\end{tabular}
\end{ruledtabular}
\end{table}

The bipartite row is a consistency check: the four-setting CHSH program at the Tsirelson angles reproduces Hall's continuum minimum $(\sqrt{2}-1)/3$ \cite{Hall2010} from a finite LP. The cluster-state rows are discussed in Sec.~\ref{sec:landscape}.

\section{Hand proofs at $n = 3$ and $n = 4$}\label{sec:hand}

\begin{theorem}[$n=3$ floor]\label{thm:n3}
$F_{\min}(3) = 1/3$.
\end{theorem}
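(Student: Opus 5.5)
By Theorem~\ref{thm:reduction}, it suffices to work on the $2^{4}=16$ classes $(P,\gamma)$ with $\gamma\in\{0,1\}^3$, keeping only the four correlator constraints. The Mermin settings are $XXX$, $XYY$, $YXY$ and $YYX$. With $S$ the $Y$-set of a setting, the model must reproduce $E(s)$ as the parity $P\oplus\bigoplus_{i\in S}\gamma_i$. So for $s=XXX$ every class in the support of $\rho_s$ has $P=0$. For $S=\{j,k\}$, every class in the support has $P\oplus\gamma_j\oplus\gamma_k=1$. I will give an upper bound by explicit construction and a matching lower bound by a satisfiability argument.

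\emph{Lower bound.} No single class satisfies all four constraints. Summing the three two-$Y$ conditions gives $3P\oplus 2(\gamma_1\oplus\gamma_2\oplus\gamma_3)=P=1$, which contradicts $P=0$. Hence each class $c$ lies in the support set $A_s$ of at most three of the four densities.

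Let $m_c=\min_s\rho_s(c)$. The key identity is
\[
1=\sum_c\rho_{s}(c)\le \sum_c m_c+\sum_c\bigl(\rho_s(c)-m_c\bigr).
\]
Averaging over the four settings, note that $m_c=0$ whenever some $s$ has $c\notin A_s$, which holds for every $c$. So $\sum_c m_c=0$ and $\sum_c\max_s\rho_s(c)\ge$ something; I need a cleaner route.

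The cleaner route is this. For each $c$, the quantity $\sum_s\rho_s(c)$ is spread over at most three settings. Summing over $c$ gives $4=\sum_s\sum_c\rho_s(c)$. Pairwise, $\sum_c|\rho_s-\rho_{s'}|=2-2\sum_c\min(\rho_s,\rho_{s'})$. So the claim $M\ge 2/3$ is equivalent to showing that some pair has overlap $\sum_c\min(\rho_s,\rho_{s'})\le 2/3$.

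Suppose instead that every pair has overlap greater than $2/3$. Then summing over the six pairs gives total overlap greater than $4$. For each class, $\sum_{s<s'}\min(\rho_s(c),\rho_{s'}(c))$ involves at most three nonzero values. That sum is at most the sum of all $\rho_s(c)$ minus the maximum, when three values are present. Summing over classes, this is at most $4-\sum_c\max_s\rho_s(c)$. I still need $\sum_c\max_s\rho_s(c)\ge 1$, since any single $\rho_s$ already sums to $1$. Combining these gives total overlap at most $3$, which would yield $M\ge 2\cdot(1-1/2)$, too strong. So the bound per class needs care.

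The correct per-class inequality should be that the three-term pairwise-min sum is at most $\tfrac{2}{3}\cdot$(the sum of values) only in the equal case. This is the main obstacle: choosing the right averaging weight so that the bound is tight at $2/3$. My first attempt would be the LP dual. Put weight $1/3$ on each pair and derive the bound $\sum_{\text{pairs}}\min\le$ (sum of the two smallest) $+$ (smallest) per class. Then use the fact that the smallest entry is $0$ when a class misses a setting. This reduces the problem to bounding $\sum_c(\text{sum of the two smallest among the } \le 3 \text{ present})\le \tfrac{2}{3}\cdot 4$ via a rearrangement, and so $4\le$ total overlap $\le 4\cdot\tfrac{2}{3}\cdot\tfrac{3}{2}$. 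If this bookkeeping fails, I will instead solve the small dual LP on the 16 classes exactly and present the dual certificate. By symmetry (Theorem~\ref{thm:symmetry}) that certificate depends only on a few invariants, which makes it checkable by hand.

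\emph{Upper bound.} Let $T$ be the set of classes satisfying exactly three of the four constraints. I take each $\rho_s$ uniform over $T\cap A_s$, using the class invariants of Theorem~\ref{thm:symmetry} to ensure the supports have equal sizes. If the supports have size $k$ and pairwise intersections of size $k'$, then $M=2(1-k'/k)$. I expect $k'/k=2/3$, which gives $F=1/3$. Lifting this fiber-uniformly via Theorem~\ref{thm:reduction} makes the model faithful.
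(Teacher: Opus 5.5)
\textbf{The gap is in your lower bound.} Your skeleton matches the paper's: support restriction, each class in at most three of the four supports, surrendered fraction $=1-$overlap, and pigeonhole over the six pairs. What you never establish is the per-class inequality that makes the pigeonhole work, and both versions you state are false.

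At a class, one of the four densities vanishes. Order the other three as $v_1\le v_2\le v_3$ (some may be zero). The six pairwise minima then sum to $2v_1+v_2$.
\begin{itemize}
\item Your first bound, ``sum minus max,'' is $v_1+v_2$, so it would require $v_1\le 0$. Your own upper-bound model refutes it: it has three entries $1/6$ at each of eight classes and total overlap $6\cdot\tfrac23=4$, which exceeds the bound $3$ you derived.
\item Your second version, a factor $\tfrac23$ per class that is tight ``in the equal case,'' is also wrong. With equal values the minima sum to exactly the full sum. The $\tfrac23$ is not a per-class factor at all; it is $4/6$.
\end{itemize}

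The missing step is one line: $2v_1+v_2\le v_1+v_2+v_3$, because $v_1\le v_3$. So at each class the pairwise minima sum to at most the total density there. Summing over classes gives total overlap $\le\sum_s\sum_c\rho_s(c)=4$. Hence some pair has overlap at most $2/3$, and $F\ge 1/3$. This is exactly the paper's argument, and it is the $k\le 3$ case of Lemma~\ref{lem:chebyshev}.

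Your closing chain ``$4\le$ total overlap $\le 4\cdot\tfrac23\cdot\tfrac32$'' lands on the right number, but it does not follow from what you wrote. Bounding the two smallest values by $\tfrac23$ of the sum leaves the extra $v_1$ term. That term needs $v_1\le\tfrac13(v_1+v_2+v_3)$, not an unexplained factor $\tfrac32$. Deferring to ``solve the dual LP'' is a plan, not a proof.

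\textbf{Your upper bound is correct and coincides with the paper's construction.} The classes violating exactly one constraint are:
\begin{itemize}
\item $P=0$ with $\gamma$ non-constant, which violates only the pair $\{j,k\}$ with $\gamma_j=\gamma_k$;
\item $P=1$ with $\gamma\in\{000,111\}$, which violates only $XXX$.
\end{itemize}
Each setting is the sole violated constraint of exactly two of these eight classes. So $|T\cap A_s|=6$ and $|T\cap A_s\cap A_{s'}|=4$, giving $F=1-4/6=1/3$ as you expected. The equality of support sizes comes from this direct count, not from Theorem~\ref{thm:symmetry}: no party permutation carries $XXX$ to a two-$Y$ setting.
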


\begin{proof}
The four Mermin parities multiply to $+1$ identically while the four targets multiply to $-1$; hence every deterministic strategy satisfies exactly three or exactly one of the four conditions. \emph{Upper bound:} let $C_i$ denote the class of strategies satisfying all conditions but the $i$th; placing weight $1/3$ on each of the three admissible classes per context gives every pairwise surrendered fraction exactly $1/3$, and all marginals vanish by the fiber lift of Theorem~\ref{thm:reduction}. (After fiber-uniformization this construction coincides with the model of Ref.~\cite{Hall2011}, Table~I, for the perfect correlators; the lift is what supplies faithfulness at no cost.) \emph{Lower bound:} each $\lambda$ supports at most three of the four densities; summing pairwise overlaps over the six pairs bounds the total overlap by $4$, so some pair overlaps at most $2/3$, i.e., some surrendered fraction is at least $1/3$.
\end{proof}

\begin{theorem}[$n=4$ floor]\label{thm:n4}
$F_{\min}(4) = 1/3$.
\end{theorem}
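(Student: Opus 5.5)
The lower bound will come from embedding the tripartite Mermin scenario, and the upper bound from an explicit construction on the class space of Theorem~\ref{thm:reduction}.

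\emph{Lower bound.} I would fix the fourth party at setting $X$ and take the four Mermin settings $XXXX$, $YYXX$, $YXYX$, $XYYX$. Their targets from Eq.~(\ref{eq:targets}) are $+1,-1,-1,-1$, so the targets multiply to $-1$. The corresponding parities multiply to $+1$ identically, because each of $a_1,\dots,a_4,b_1,b_2,b_3$ occurs an even number of times; in particular $a_4$ occurs four times. So this sub-scenario has exactly the frustration structure used in the proof of Theorem~\ref{thm:n3}: every deterministic $\lambda$ satisfies at most three of the four conditions.

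A faithful model reproduces these four perfect correlators, so each $\rho_s$ is supported on strategies satisfying condition $s$. Each $\lambda$ therefore lies in the support of at most three of the four densities. The overlap-counting argument of Theorem~\ref{thm:n3} then applies verbatim. Summing $\sum_\lambda \min(\rho_s,\rho_{s'})$ over the six pairs gives at most $4$, so some pair has overlap at most $2/3$. Hence $M\ge 2/3$ within the sub-scenario, and since $M$ is a maximum over all pairs of the eight settings, $F(4)\ge 1/3$.

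\emph{Upper bound.} By Theorem~\ref{thm:reduction} it suffices to construct densities on the $32$ classes $(P,\gamma)$, $\gamma\in\{0,1\}^4$; faithfulness is then automatic via the fiber-uniform lift. Setting $S$ (its $Y$-set, $|S|$ even) is satisfied by class $(P,\gamma)$ iff
\[
P\oplus\bigoplus_{i\in S}\gamma_i=t(S),\qquad t(S)=\tfrac{|S|}{2}\bmod 2 .
\]
This is an affine function of $S$ compared with a quadratic target on the $8$ even subsets. The maximum agreement is $6/8$, matching $s=3/4$ in Table~\ref{tab:floors}.

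My ansatz is the frustration construction: let $G$ be the set of \emph{ground-state} classes, those satisfying exactly $6$ of the $8$ settings, and let $\rho_S$ be uniform on the elements of $G$ that satisfy $S$. The symmetry group, consisting of party permutations together with the shifts $\gamma\mapsto\gamma\oplus\delta$, $P\mapsto P\oplus(\cdot)$ that preserve the target pattern up to relabeling, should act transitively enough that every $|\rho_S|$ has the same support size $N$. The overlap between $\rho_S$ and $\rho_{S'}$ is then
\[
\frac{|\{g\in G: g \text{ satisfies both } S \text{ and } S'\}|}{N}.
\]
I would enumerate $G$ explicitly, which is a small computation since each ground state violates exactly two settings. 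I then need the count of ground states violating $S$ or $S'$ to be at most $N/3$ for every pair.

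\emph{Main obstacle.} The hard step is verifying that this worst-case pairwise overlap is at least $2/3$. Pairs of settings may be non-equivalent under the symmetry; for example, $XXXX$ versus a two-$Y$ setting differs from two two-$Y$ settings sharing one $Y$. If the uniform ground-state measure falls short for some pair class, I would instead weight ground states by orbit type under $S_4$, as Theorem~\ref{thm:symmetry} permits. That leaves a tiny LP in a handful of orbit weights, whose rational optimum I would exhibit and check against the $2/3$ overlap target for each pair type.

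Alternatively, I would lift the $n=3$ optimum directly. Take the party-4 bit $\gamma_4$ uniform and independent, and define the densities for settings with party 4 on $Y$ by the analogous shifted tripartite construction. I would then check whether the cross pairs, one setting with party 4 on $X$ and one with party 4 on $Y$, still overlap at least $2/3$.
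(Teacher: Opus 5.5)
Your lower bound is exactly the paper's. The settings $XXXX, YYXX, YXYX, XYYX$ form the type-(i) contradiction $\{\emptyset,\{12\},\{13\},\{23\}\}$ of Lemma~\ref{lem:catalog}, and your three-of-four overlap count is Theorem~\ref{thm:lowerbound} with $N=4$, $m=3$. Pointwise it reduces to $2v_1+v_2\le v_1+v_2+v_3$ for $v_1\le v_2\le v_3$.

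Your upper-bound ansatz is also exactly the paper's construction. However, the proposal never shows that it works: $m=6$ is asserted by appeal to the table, $G$ is never determined, and the overlap check is left as the ``main obstacle'' with contingency plans. The missing ingredient is the ground-state classification (Lemma~\ref{lem:groundstates}), which a short count supplies. Write $w=|\gamma|$.
\begin{itemize}
\item For $P=0$, the satisfied settings are $\emptyset$, the $w(4-w)$ pairs $\{ij\}$ with $\gamma_i\ne\gamma_j$, and $[4]$ iff $w$ is even. This gives counts $2,4,6,4,2$ for $w=0,\dots,4$.
\item For $P=1$, the satisfied settings are the $\binom{w}{2}+\binom{4-w}{2}$ pairs with $\gamma_i=\gamma_j$, and $[4]$ iff $w$ is odd. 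This gives counts $6,4,2,4,6$.
\end{itemize}
Hence $m=6$, and $G$ consists of eight classes:
\begin{itemize}
\item the six classes $(0;\gamma)$ with $|\gamma|=2$, each violating exactly the two pairs $\{ij\},\{kl\}$ on which $\gamma$ is constant (a complementary pair);
\item the classes $(1;0000)$ and $(1;1111)$, which violate exactly $\{\emptyset,[4]\}$.
\end{itemize}
Each setting lies in exactly one complementary pair, and each complementary pair is violated by exactly two ground states. So $|G_S|=6$ for every $S$. Complementary settings have $G_S=G_{\bar S}$ (overlap $1$), and all other pairs have $|G_S\cap G_{S'}|=4$ (overlap $2/3$). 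Thus $F=1/3$, with faithfulness supplied by the fiber lift.

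Two slips in your plan would also have misled you.
\begin{itemize}
\item Your acceptance test should bound by $N/3$ the number of ground states that satisfy $S$ but violate $S'$. As you wrote it (ground states violating $S$ \emph{or} $S'$), the count is $4>N/3=2$ for every non-complementary pair. You would therefore have rejected a construction that actually attains the optimum, and retreated to the orbit-weighted LP or the $n=3$ lift, neither of which is needed.
\item The pair types that genuinely differ are complementary versus non-complementary, not ``$XXXX$ versus a two-$Y$ setting'' versus ``two two-$Y$ settings sharing a $Y$''. Those two cases receive the same overlap $2/3$. Complementary pairs such as $(\{12\},\{34\})$ or $(\emptyset,[4])$ receive overlap $1$.
\end{itemize}
With the classification inserted and the test corrected, your proof coincides with the paper's.
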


\begin{proof}[Proof sketch]
In outline: GF(2) relations among the eight constraint parities with odd target sums forbid joint satisfaction; optimal classes satisfy exactly six of eight conditions, with the two missed settings always forming a complementary pair, so the missed pairs partition the settings into four complementary pairs; settings across pairs then reproduce the three-of-four combinatorics of Theorem~\ref{thm:n3}. The full proof---including the catalog of minimal contradictions, the classification of the eight ground-state classes, the explicit optimal construction, and the matching lower bound---is given in Appendix~\ref{app:n4}.
\end{proof}

\section{The universal ceiling}

\begin{theorem}[Ceiling]\label{thm:ceiling}
$F_{\min}(n) \le 1/2$ for every $n \ge 3$.
\end{theorem}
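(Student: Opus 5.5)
The plan is to prove Theorem~\ref{thm:ceiling} with one explicit faithful model valid for every $n$, built in the reduced class space of Theorem~\ref{thm:reduction}. By that theorem it suffices to work on the $2^{n+1}$ classes $(P,\gamma)\in\mathbb{F}_2\times\mathbb{F}_2^n$. For each Mermin setting with $Y$-subset $S$ (so $|S|$ is even), the full correlator condition says the linear functional $\ell_S(P,\gamma) := P \oplus \bigoplus_{i\in S}\gamma_i$ must equal a fixed bit $t_S$, where $(-1)^{t_S}=E(s)$ as in Eq.~\eqref{eq:targets}. The marginal constraints are deleted in the reduced program, and the fiber-uniform lift then restores them at no cost.

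\emph{Construction.} For each setting $s$, take $\rho_s$ to be the uniform distribution on the affine hyperplane
\[
H_S=\{(P,\gamma):\ell_S(P,\gamma)=t_S\}.
\]
This set has $2^n$ elements. Every class in its support satisfies the parity condition, so $E(s)$ is reproduced with certainty. Lifting each class uniformly over its fiber of $2^{n-1}$ strategies then gives a faithful local deterministic model, by Theorem~\ref{thm:reduction}.

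\emph{Cost.} For two distinct settings with subsets $S\neq S'$, the functionals $\ell_S$ and $\ell_{S'}$ are distinct, since they differ on some $\gamma_i$. They are also nonzero, because each has coefficient $1$ on $P$. They are therefore linearly independent over $\mathbb{F}_2$, so $H_S\cap H_{S'}$ is a nonempty affine subspace of codimension two, of size $2^{n-1}$. The overlap $\sum_\lambda\min(\rho_s,\rho_{s'})$ equals $2^{n-1}\cdot 2^{-n}=1/2$. Hence $\sum_\lambda|\rho_s-\rho_{s'}| = 2-2\cdot\tfrac12=1$ for every pair, which gives $M=1$ and $F=1/2$.

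The only point requiring care is that the distance computed in the class space equals the distance in the full strategy space after lifting. I expect this to be the main thing to check, although it is mild. It holds because the lift multiplies both densities by the same uniform fiber weight $2^{-(n-1)}$ on each fiber, so the $L_1$ distance is preserved exactly; this is the cost-free direction already asserted in Theorem~\ref{thm:reduction}. The resulting bound is uniform in $n$, including even $n$, and establishes $F_{\min}(n)\le 1/2$.
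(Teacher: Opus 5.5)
Your proposal is correct and is essentially the paper's proof: uniform densities on each setting's satisfying affine hyperplane in class space, pairwise intersections of size $2^{n-1}$ giving overlap $1/2$ (hence $F=1/2$), and the fiber lift of Theorem~\ref{thm:reduction} supplying faithfulness. Your explicit checks spell out steps the paper leaves implicit: linear independence from the nonzero $P$-coefficient plus distinctness, which also guarantees the intersection is nonempty, and exact preservation of $L_1$ distances under the lift.
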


\begin{proof}
Work at class level (Theorem~\ref{thm:reduction}). Each setting's satisfying set of classes is an affine coset of a hyperplane in $\mathrm{GF}(2)^{n+1}$, of density $1/2$. Distinct settings $S \ne S'$ have distinct linear functionals (their difference is $\bigoplus_{i \in S \triangle S'} \gamma_i \not\equiv 0$), so any two satisfying sets intersect in density exactly $1/4$. Let each $\rho_s$ be uniform on its satisfying set of $2^n$ classes, with value $2^{-n}$ there. The overlap of any pair is then $\sum_c \min(\rho_s, \rho_{s'}) = 2^{n-1} \cdot 2^{-n} = 1/2$, so every pairwise surrendered fraction is $1 - 1/2 = 1/2$, and marginals vanish by the fiber lift.
\end{proof}

Lawful measurement dependence therefore has a bounded price schedule: a faithful model surrendering exactly one half exists at every size, so the statistics never \emph{require} the conspiratorial limit $F \to 1$ (though, as with any bound on necessity, they never exclude it). The geometric content is elementary: satisfying one parity constraint only ever confines the hidden state to half the space, and halves of halves always share a quarter.

\section{Optimal densities: landscape and inheritance}\label{sec:landscape}

\subsection{Contextual ground states}

Define the frustration Hamiltonian $H(\lambda) :=$ the number of Mermin constraints violated by $\lambda$'s class. The GHZ contradiction is the statement that $H$ has no zero-energy state. The \emph{minimal-frustration ansatz}---$\rho_s$ uniform over ground states of $H$ within the sector satisfying $s$---achieves the exact optimum at every Mermin size computed (through $n = 13$), where it evaluates, in exact integer arithmetic, to $R/[2(R+1)]$; optimality is certified by coincidence with the proven lower bound of Theorem~\ref{thm:lowerbound}.

\subsection{Stabilizer spectrum and a pairwise no-go}\label{sec:spectrum}

The Mermin constraints are classical shadows of the GHZ state's stabilizers; $H$ is the classicalized stabilizer Hamiltonian, and its frustration is the GHZ contradiction. Writing $\hat{t}_S := E(s) \in \{\pm 1\}$ for the target of the setting with $Y$-subset $S$, Fourier analysis over $\mathrm{GF}(2)^{n+1}$ (Appendix~\ref{app:fourier}) shows that
\begin{equation}
H \;=\; 2^{n-2} \;-\; \tfrac{1}{2}\sum_S \hat{t}_S\, \chi_S,
\label{eq:spectrum}
\end{equation}
with nonzero coefficients only on the $2^{n-1}$ stabilizer characters $\chi_S$, each of magnitude exactly $1/2$, at odd degrees $1, 3, \ldots, 2\lfloor n/2\rfloor + 1$. In particular the residual above degree $2$ is nonzero at every $n \ge 3$: \emph{no pairwise interaction can generate the landscape}. Equivalently, $H$ is the classical shadow of the very stabilizer Hamiltonian whose unique ground state is the prepared state (Appendix~\ref{app:fourier}). I record the pattern as an inheritance principle (conjectural in general): \emph{the hidden variables inherit the energy landscape of the state that wrote them.}

\subsection{Out-of-sample confirmation on cluster states}

The 4-qubit linear cluster state---not local-Clifford equivalent to GHZ$_4$---has a minimal all-versus-nothing (AVN) core of four stabilizer conditions with signs $(+,+,+,-)$. The faithful LP gives $F = 1/3$ exactly; the frustration ansatz is feasible and optimal; $s = 3/4$ and $F \cdot s = 1/4$. The $C_5$ ring cluster, in a third entanglement class, confirms the same values on its minimal core. Deliberate padding of the $C_4$ core with two compatible stabilizers raises $s$ to $5/6$ while the floor remains the core's $1/3$, so $F \cdot s = 5/18$ there: the $1/4$ invariant is a property of irreducible cores, not of arbitrary constraint sets, and general scenarios conjecturally inherit their floor from the tightest embedded irreducible sub-scenario.

\subsection{A general lower bound}

\begin{theorem}[Overlap bound]\label{thm:lowerbound}
For any subset $S'$ of deterministic-target constraints with $|S'| \ge 2$ and maximum satisfiable count $m(S')$,
\begin{equation}
F_{\min} \;\ge\; \frac{|S'| - m(S')}{|S'| - 1},
\label{eq:overlapbound}
\end{equation}
and hence $F_{\min} \ge B := \max_{S'} (|S'| - m(S'))/(|S'| - 1).$
\end{theorem}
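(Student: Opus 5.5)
The plan is to generalize the averaging argument of the lower-bound half of Theorem~\ref{thm:n3}. Fix a faithful model with densities $\rho_s$, let $k := |S'|$, and restrict attention to the $k$ settings in $S'$. Since the model is deterministic and each setting in $S'$ has a deterministic target, every $\lambda$ in the support of $\rho_s$ must satisfy the constraint of $s$: an outcome differing from the target on a set of positive $\rho_s$-weight would move $E(s)$ off $\pm 1$. Hence $\rho_s$ is supported on $A_s := \{\lambda : \lambda \text{ satisfies } s\}$. By the definition of $m(S')$, each $\lambda$ lies in at most $m := m(S')$ of the sets $A_s$, $s \in S'$. No reduction to classes is needed; the argument works directly on strategies.

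The core step bounds the total pairwise overlap. Write $O(s,s') := \sum_\lambda \min(\rho_s(\lambda),\rho_{s'}(\lambda))$, so that $\sum_\lambda |\rho_s - \rho_{s'}| = 2 - 2O(s,s')$ and the surrendered fraction for the pair is $1 - O(s,s')$. Fix $\lambda$ and let the values $\rho_s(\lambda)$, $s \in S'$, sorted decreasingly, be $x_1 \ge x_2 \ge \dots \ge x_k$, of which at most $m$ are nonzero. Then
\begin{equation}
\sum_{s<s'} \min(\rho_s(\lambda),\rho_{s'}(\lambda)) = \sum_{j=1}^{k} (j-1)\, x_j \le (m-1)\sum_{j=1}^{k} x_j ,
\end{equation}
because $x_j = 0$ for $j > m$, so only indices $j \le m$ contribute and each has $j-1 \le m-1$. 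Summing over $\lambda$ and using normalization gives $\sum_{\text{pairs}} O \le (m-1)k$. With $\binom{k}{2}$ pairs, some pair has $O \le 2(m-1)/(k-1)$.

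This constant is weaker than the claimed bound, and tightening it is the main obstacle. For $n=3$ it gives $O \le 4/3$, which is vacuous. The paper instead says ``total overlap by $4$,'' which is $m+1$ rather than $2(m-1)$. I would therefore refine the per-$\lambda$ estimate using $\min(x,y) \le (x+y)/2$ only partially. Concretely, I would bound the contribution of each $\lambda$ by $\binom{m}{2}$ times a weighted share, and then apply a double-counting argument in which each density's mass is charged to the $m-1$ partners sharing its support. The target inequality is
\begin{equation}
\sum_{\text{pairs}} O \le \frac{k(m-1)}{2} \cdot \frac{k}{k-1} \cdot \frac{m-1}{m-1} .
\end{equation}
More simply, the claimed bound is equivalent to $\max$ over pairs of $(1 - O) \ge (k-m)/(k-1)$, that is, $\min O \le (m-1)/(k-1)$. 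This follows if $\sum_{\text{pairs}} O \le k(m-1)/2$, i.e.\ the average over pairs is at most $(m-1)/(k-1)$.

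To obtain $\sum_{\text{pairs}} O \le k(m-1)/2$, I would use the fact that a minimum is at most an average. For each $\lambda$, $\sum_{s<s'} \min \le \sum_{s<s' \text{ both in support}} (\rho_s + \rho_{s'})/2$. In this sum each nonzero $\rho_s(\lambda)$ appears with at most $m-1$ partners and weight $1/2$, so the total is at most $\frac{m-1}{2}\sum_s \rho_s(\lambda)$. Summing over $\lambda$ gives exactly $k(m-1)/2$. (This is the correct route, and the sorted-sum attempt above was too lossy.) Some pair then has surrendered fraction $1 - O \ge 1 - (m-1)/(k-1) = (k-m)/(k-1)$, and since $M$ is a maximum over all pairs, $F_{\min}$ is at least this for every $S'$, which yields $B$. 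As a sanity check, $n=3$ gives $k=4$, $m=3$, and the bound evaluates to $1/3$, matching Theorem~\ref{thm:n3}.
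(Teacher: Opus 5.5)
Your final argument is correct and is essentially the paper's proof. The steps match: support restriction to the satisfying sets, a pointwise bound $\sum_{\text{pairs}}\min\bigl(\rho_s(\lambda),\rho_{s'}(\lambda)\bigr)\le \tfrac{k_\lambda-1}{2}\sum_{s}\rho_s(\lambda)$ with $k_\lambda\le m$, summation using normalization, and pigeonhole over the $\binom{k}{2}$ pairs. You get the pointwise step from $\min(x,y)\le (x+y)/2$ where the paper applies Chebyshev's sum inequality to the sorted values; both give the identical bound. Working on strategies rather than classes changes nothing, since satisfaction is a class function.

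You should delete the exploratory middle. The sorted-sum estimate loses a factor of $2$, and the ``target inequality'' carrying the factor $k/(k-1)$ is wrong as written: it would not yield the constant $(m-1)/(k-1)$.
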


The proof rests on a support restriction (deterministic targets force each density onto its satisfying set), a pointwise application of Chebyshev's sum inequality, and a pigeonhole step; it is given in full in Appendix~\ref{app:overlap}.

Theorem~\ref{thm:lowerbound} is exhaustively tight on Mermin $n = 3, 4, 5$ (maximizing $B$ over all subsets, by complete enumeration) and reproduces the entire staircase on the full Mermin sets, where $(N - m)/(N - 1) = R/[2(R+1)]$ identically. It is not tight universally: deleting a single constraint from Mermin-5 leaves the computed floor at $2/5$ while $B$ falls to $4/11$. The exact floor therefore lies, in general, between Theorem~\ref{thm:lowerbound}'s proven lower bound and the frustration-ansatz upper bound, which coincide on every homogeneous family and irreducible core computed to date. Closing that gap is the sharpest open problem this work poses; LP dual certificates are available at every computed size as raw material.

A discriminating scenario adjudicates between candidate closed forms. The $C_6$ ring cluster's stabilizer group contains no contradictions of size three; alongside four-element contradictions of the familiar type, it contains $1{,}920$ minimal \emph{five}-element contradictions---a new core size. A minimal $5$-core has $s = 4/5$ by minimality, so two candidate formulas diverge for the first time: $1/(4s) = 5/16$ versus Theorem~\ref{thm:lowerbound}'s $(5-4)/4 = 1/4$. The faithful LP returns exactly $1/4$, with the frustration ansatz feasible and optimal. The $F \cdot s = 1/4$ invariant is thus special to families where $N(N-1) = 4m(N-m)$; the surviving general candidates are Theorem~\ref{thm:lowerbound}'s bound as the exact floor on cores and homogeneous families, and the frustration ansatz as the exact optimum in every scenario tested (nine scenarios, three entanglement classes, cores of two sizes).

\subsection{The landscape is glassy}

No bit-hierarchy ultrametric grades the violation shells ($0$ of $120$ orderings at $n = 4$, exhaustive), and even Hamming proximity fails: maximal-violation configurations lie at distance one from the ground set. The energy is spectral---supported on stabilizer characters---rather than proximity-graded, and single-site relaxation dynamics on the landscape are observed in simulation to be non-ergodic on constraint sectors (scripts \texttt{ultrametric\_check.py} and \texttt{relaxation\_check.py} in the Data Availability repository); any dynamical account then requires collective ($\ge 2$-site) moves.

\section{The staircase law}

\begin{conjecture}[Staircase]\label{conj:staircase}
With $R(n) := 2^{\lfloor (n-1)/2 \rfloor}$ the Mermin violation ratio and $s(n) := (R+1)/(2R)$ the classical satisfiability,
\begin{equation}
F_{\min}(n) \;=\; \frac{R}{2(R+1)} \;=\; \frac{1}{4\,s(n)}, \qquad F_{\min}\cdot s = \frac{1}{4}.
\end{equation}
\end{conjecture}

The law fits all computed points: eleven exact Mermin values ($n = 3$--$13$, every pairing and every step, including the first $R = 8$ and $R = 16$ points) and the cluster cores, with geometric convergence to the ceiling $1/2$, never attained. The natural proof program pairs a generalization of the ground-state overlap counting of Theorems~\ref{thm:n3}--\ref{thm:n4} (upper bound) with the LP dual certificates (lower bound).

\section{Discussion}

\emph{Operational reading.} $F_{\min}(n)$ is the counterfeiting threshold for multipartite device-independent certificates: an adversary---or a physics---commanding that fraction of settings dependence simulates the $n$-party violation classically and locally. The closed law states that multipartite certificates never cost the counterfeiter more than half of measurement independence, a security-relevant statement independent of interpretation.

\emph{Relation to exclusion results.} Vieira, Ramanathan, and Cabello~\cite{Vieira2025} show that certain quantum correlations cannot be reproduced by any hidden-variable model with $\ell$-measurement dependence for any $\ell > 0$---models in which every settings pair retains probability at least $\ell$ for every hidden state~\cite{Putz2014}---so that if measurement independence is the assumption that fails, surviving models must assign zero probability to some settings for some hidden states. The present results live entirely within that surviving class, forced rather than chosen: with deterministic targets, Lemma~\ref{lem:support} obliges every faithful model to place zero weight on every class violating the relevant constraint, and since no class satisfies all constraints, every hidden state excludes some settings. Exclusion in the minimum-probability parameterization is thus compatible with a bounded price in the variational one: the floors computed here are the exact cost, in Hall's measure, of precisely the models that survive the exclusions of Ref.~\cite{Vieira2025}, and the ceiling theorem keeps that cost strictly below total dependence at every size.

\emph{A demand curve for measurement-dependent programs.} Several research programs propose lawful (nomic, non-conspiratorial) measurement dependence as the operative Bell exit: invariant-set theory \cite{Palmer}, the cellular-automaton interpretation \cite{tHooft2016}, future-input-dependent dynamics \cite{HossenfelderPalmer2020,DonadiHossenfelder2022}, and all-at-once or retrocausal formulations \cite{WhartonArgaman2020,Adlam2018}. The floors computed here constitute a common quantitative obligation for all of them: any such mechanism must supply, for every $n$, at least $F_{\min}(n)$ of settings dependence with exactly the contextual ground-state structure of Sec.~\ref{sec:landscape}---and the pairwise no-go shows that no two-body interaction suffices.

\emph{Relation to classical simulability.} The reduction theorem is the hidden-variable shadow of the Gottesman--Knill boundary \cite{Gottesman1998}: exactly where the stabilizer record compresses, the optimization collapses to polynomial size, consistent with known stabilizer-rank obstructions beyond the Clifford fragment \cite{BravyiGosset2016}.

\emph{Outlook.} The staircase stands exact through thirteen parties and the symmetry-reduced program makes further rungs commodity computations; further graph-state cores stress Conjecture~\ref{conj:staircase} toward theoremhood; and the LP dual certificates open its lower-bound half. The floors stand as an entry bar for any future measurement-dependent mechanism.

\section{Methods}\label{sec:methods}

\emph{Computation.} The linear programs of Secs.~\ref{sec:lp}--\ref{sec:reduction} were solved in double precision with the open-source HiGHS solver (via SciPy), using two independent algorithms---dual simplex on the symmetry-reduced program and interior point on the class-level program---on independent hardware; candidate rationals were identified from the numerical optima by bounded-denominator reconstruction. These floating-point results are then made rigorous by a float-free certification: a pure integer-counting script evaluates the maximum satisfiability $m$ and the frustration-ansatz overlaps exactly and verifies, for every Mermin size $3 \le n \le 13$, that the proven lower bound of Theorem~\ref{thm:lowerbound} and the construction's value coincide at $R/[2(R+1)]$, so no certified value depends on solver tolerance. The satisfiability counts themselves follow from an exhaustively machine-verified Mermin-operator identity. All scripts, with self-certifying assertions, are included in the repository listed under Data Availability.

\emph{AI-assisted preparation.} Large-language-model assistance (Claude, Anthropic) was used during this research as a tool for checking derivations, drafting text, and writing verification code, under the sole direction of the author, who takes full responsibility for the entire content. All exact values and identities were verified independently of the tool by machine computation as described above.

\begin{acknowledgments}
The author thanks M.~J.~W.~Hall for the body of work that made the question precise.
\end{acknowledgments}

\section*{Declarations}

The author declares no conflicts of interest. This work received no external funding and was conducted independently. This is a theoretical study; it involved no human participants, human data or tissue, or animals, and ethics approval is not applicable.

\section*{Data availability}

All code and data supporting this work, including the exact certification script, the fiber-reduced linear program, and the symmetry-reduced program of Theorem~\ref{thm:symmetry} (which recomputes the entire staircase $n = 3$--$13$ in under one minute on commodity hardware), are archived at \url{https://doi.org/10.5281/zenodo.21445609} and maintained at \url{https://github.com/RandomInternetPreson/moire-phase-space-sampler/tree/main/DST_Bell_MI/Price_of_Locality_Source_Material}. An extended companion version of this work is archived at the same DOI.

\appendix

\section{Proof of Theorem~\ref{thm:reduction} (fiber reduction)}\label{app:reduction}

Throughout, a deterministic strategy is $\lambda = (\alpha, \beta) \in \mathrm{GF}(2)^n \times \mathrm{GF}(2)^n$ with $a_i = (-1)^{\alpha_i}$ and $b_i = (-1)^{\beta_i}$; for the setting with $Y$-subset $S \subseteq [n]$, party $i$ outputs $a_i$ if $i \notin S$ and $b_i$ if $i \in S$. Write $P := \bigoplus_i \alpha_i$, $\gamma_i := \alpha_i \oplus \beta_i$, and let $c(\lambda) := (P, \gamma) \in \mathrm{GF}(2)^{n+1}$ be the class of $\lambda$.

\begin{lemma}[Product formula]\label{lem:product}
For any subset $T \subseteq [n]$ and setting $S$, the product of the outputs of the parties in $T$ is
\begin{equation}
\prod_{i \in T \setminus S} a_i \prod_{i \in T \cap S} b_i
\;=\; (-1)^{\,\sum_{i \in T} \alpha_i \,+\, \sum_{i \in T \cap S} \gamma_i}.
\label{eq:product}
\end{equation}
In particular, for $T = [n]$ the product is $(-1)^{P \oplus \bigoplus_{i \in S} \gamma_i}$, a function of the class alone.
\end{lemma}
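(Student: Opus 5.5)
The identity is a direct computation in exponents, so I will work entirely in $\mathrm{GF}(2)$ and compare signs. First, I write every output in terms of $\alpha$ and $\gamma$. Since $\gamma_i = \alpha_i \oplus \beta_i$, we have $\beta_i = \alpha_i \oplus \gamma_i$, and therefore $b_i = (-1)^{\beta_i} = (-1)^{\alpha_i + \gamma_i}$, because $(-1)^{x}$ depends only on $x \bmod 2$. For $i \notin S$ the output is $a_i = (-1)^{\alpha_i}$. Writing the product over $T$ as a single power of $-1$ gives the exponent
\[
\sum_{i \in T\setminus S} \alpha_i \;+\; \sum_{i \in T \cap S} (\alpha_i + \gamma_i) \;=\; \sum_{i\in T}\alpha_i + \sum_{i \in T\cap S}\gamma_i ,
\]
since $T\setminus S$ and $T\cap S$ partition $T$. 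This is exactly \eqref{eq:product}.

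For the specialization, I take $T=[n]$. Then $\sum_{i\in[n]}\alpha_i \equiv P \pmod 2$, and $T\cap S = S$. So the exponent reduces modulo $2$ to $P \oplus \bigoplus_{i\in S}\gamma_i$, which depends on $\lambda$ only through the class $c(\lambda)=(P,\gamma)$.

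There is no real obstacle here. The only point that needs care is to note explicitly that all sums are taken modulo $2$ inside the exponent of $-1$, so that integer sums and $\oplus$ can be used interchangeably. It is also worth remarking, for later use, that for a proper subset $T \subsetneq [n]$ the term $\sum_{i\in T}\alpha_i$ is \emph{not} a class function. This is the fact the fiber-uniform lift will exploit to kill the marginals.
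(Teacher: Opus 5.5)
Your proof is correct and is the same as the paper's: substitute $\beta_i=\alpha_i\oplus\gamma_i$, collect exponents modulo $2$ using the partition $T=(T\setminus S)\sqcup(T\cap S)$, and use $\sum_i\alpha_i\equiv P$ for $T=[n]$. One small correction to your closing aside: the claim that $\sum_{i\in T}\alpha_i$ is not a class function holds for \emph{nonempty} proper $T$, because for $T=\emptyset$ the sum is $0$ and trivially a class function.
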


\begin{proof}
Substitute $\beta_i = \alpha_i \oplus \gamma_i$ into $\prod_{T \setminus S} (-1)^{\alpha_i} \prod_{T \cap S} (-1)^{\beta_i}$ and collect exponents modulo 2. For $T = [n]$, $\sum_i \alpha_i \equiv P$.
\end{proof}

\begin{lemma}[Fiber structure]\label{lem:fiber}
Each class $(P, \gamma)$ is the image of exactly $2^{n-1}$ strategies: $\alpha$ ranges over the affine coset $A_P := \{\alpha : \bigoplus_i \alpha_i = P\}$ of the parity hyperplane $H_0 := \{h : \bigoplus_i h_i = 0\}$, with $\beta = \alpha \oplus \gamma$ determined. The group $H_0$, acting by $g_h : (\alpha, \beta) \mapsto (\alpha \oplus h, \beta \oplus h)$, preserves every class and acts simply transitively on each fiber.
\end{lemma}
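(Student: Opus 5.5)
The plan is to prove the two assertions of the lemma separately. First I count the fiber over a class. Then I show that the translations $g_h$ preserve classes and act simply transitively on fibers. Everything is linear algebra over $\mathrm{GF}(2)$, written in the coordinates $(\alpha,\beta)$ fixed at the start of this appendix.

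\emph{Fiber count.} Fix a class $(P,\gamma)\in\mathrm{GF}(2)^{n+1}$. A strategy $(\alpha,\beta)$ lies over this class exactly when two conditions hold:
\begin{itemize}
\item $\bigoplus_i\alpha_i=P$, i.e.\ $\alpha\in A_P$;
\item $\alpha\oplus\beta=\gamma$, i.e.\ $\beta=\alpha\oplus\gamma$.
\end{itemize}
So the map $\alpha\mapsto(\alpha,\alpha\oplus\gamma)$ is a bijection from $A_P$ onto the fiber. The set $A_P$ is the preimage of $P$ under the surjective linear functional $h\mapsto\bigoplus_i h_i$ on $\mathrm{GF}(2)^n$. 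Hence it is a coset of its kernel $H_0$, which has dimension $n-1$, and so $|A_P|=2^{n-1}$. As a consistency check, summing $2^{n-1}$ over all $2^{n+1}$ classes gives $4^n$, the total number of strategies. This shows the class map $c$ is surjective with uniform fibers.

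\emph{Group action.} Take $h\in H_0$. Under $g_h$ we get
\[
\bigoplus_i(\alpha_i\oplus h_i)=P\oplus\bigoplus_i h_i=P,
\qquad
(\alpha\oplus h)\oplus(\beta\oplus h)=\alpha\oplus\beta=\gamma .
\]
So $c\circ g_h=c$, and each $g_h$ maps every fiber into itself. The rule $h\mapsto g_h$ is a homomorphism from the additive group $H_0$, since $g_h\circ g_{h'}=g_{h\oplus h'}$ and $g_0=\mathrm{id}$. Hence it is a group action.

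\emph{Free and transitive.} Freeness: $g_h(\alpha,\beta)=(\alpha,\beta)$ forces $h=0$. Transitivity: given two strategies $(\alpha,\alpha\oplus\gamma)$ and $(\alpha',\alpha'\oplus\gamma)$ in the same fiber, set $h:=\alpha\oplus\alpha'$. Both $\alpha$ and $\alpha'$ lie in $A_P$, so $\bigoplus_i h_i=P\oplus P=0$ and $h\in H_0$. Moreover $g_h$ sends the first strategy to the second, because $\beta\oplus h=\alpha\oplus\gamma\oplus\alpha\oplus\alpha'=\alpha'\oplus\gamma$. Freeness and transitivity together give simple transitivity. This also recovers the fiber count independently, since $|H_0|=2^{n-1}$.

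There is no substantive obstacle here. The only point needing care is that $h$ must be restricted to $H_0$ rather than all of $\mathrm{GF}(2)^n$: a translation with odd weight flips $P$ and therefore moves a strategy to a different class. The difference $\alpha\oplus\alpha'$ of two elements of the same coset $A_P$ automatically lies in $H_0$, and that is exactly what makes transitivity work.
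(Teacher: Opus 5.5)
Your proof is correct and follows the same route as the paper. The paper simply notes that $g_h$ fixes $\gamma$, fixes $P$ exactly when $\bigoplus_i h_i = 0$, and that translation by $H_0$ is simply transitive on the coset $A_P$; you have written out the fiber count and the free/transitive verification explicitly.
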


\begin{proof}
Immediate: $g_h$ fixes $\gamma$ and fixes $P$ exactly when $\bigoplus_i h_i = 0$, and translation by $H_0$ is simply transitive on any coset $A_P$.
\end{proof}

\begin{lemma}[Marginal annihilation]\label{lem:annihilation}
Let $\rho_s$ be fiber-uniform, i.e., constant on each fiber. Then every proper-subset correlator vanishes: for all settings $S$ and all $T$ with $\emptyset \ne T \subsetneq [n]$,
$\sum_\lambda \rho_s(\lambda) \prod_{i \in T \setminus S} a_i \prod_{i \in T \cap S} b_i = 0$.
\end{lemma}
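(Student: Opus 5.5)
The plan is to average the correlator over each fiber separately and show that the fiber average is already zero. Since $\rho_s$ is constant on each fiber, the correlator splits as
\[
\sum_{c} \rho_s(c) \sum_{\lambda \in \text{fiber}(c)} \prod_{i \in T \setminus S} a_i \prod_{i \in T \cap S} b_i .
\]
Here $\rho_s(c)$ denotes the common value of $\rho_s$ on the fiber over class $c$. It therefore suffices to prove that, for every class $c=(P,\gamma)$, every setting $S$ and every $T$ with $\emptyset \ne T \subsetneq [n]$, the inner fiber sum vanishes. No property of $\rho_s$ beyond fiber-uniformity will be used.

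To evaluate the inner sum I will use the Product formula (Lemma~\ref{lem:product}). It rewrites the summand as
\[
(-1)^{\sum_{i\in T}\alpha_i + \sum_{i \in T\cap S}\gamma_i}.
\]
On a fixed fiber $\gamma$ is constant, so the factor $(-1)^{\sum_{T\cap S}\gamma_i}$ comes out of the sum. What remains is $\sum_{\alpha \in A_P} (-1)^{\sum_{i\in T}\alpha_i}$, which by Lemma~\ref{lem:fiber} is a sum over the affine coset $A_P$ of the parity hyperplane $H_0$; $\beta$ is then determined by $\alpha$.

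The key step is showing that this character sum over $A_P$ is zero. I will pick any $\alpha_0 \in A_P$ and write $\alpha = \alpha_0 \oplus h$ with $h$ ranging over $H_0$. The sum becomes $(-1)^{\sum_T \alpha_{0,i}}\sum_{h\in H_0}\chi_T(h)$, where $\chi_T(h) = (-1)^{\sum_{i\in T} h_i}$. The map $\chi_T$ is a group character of $H_0$, and its sum vanishes unless $\chi_T$ is trivial on $H_0$.

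The condition $\emptyset \ne T \subsetneq [n]$ is exactly what excludes triviality. Since $T$ is proper there is some $j\notin T$, and since $T$ is nonempty there is some $i\in T$. The vector $h = e_i \oplus e_j$ has even weight, so it lies in $H_0$, and it satisfies $\chi_T(h) = -1$. Equivalently, the linear functionals on $\mathrm{GF}(2)^n$ that vanish on $H_0$ are only $0$ and the all-ones functional, which correspond to $T=\emptyset$ and $T=[n]$. A different but equally quick route is the involution $g_h$ from Lemma~\ref{lem:fiber} with this same $h$: it preserves the fiber and flips the sign of the summand, so it pairs terms that cancel.

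I expect no real obstacle. The only point needing care is confirming that the cancelling translation lies in $H_0$ and not merely in $\mathrm{GF}(2)^n$, because the translation must stay inside the fiber. That is precisely why both endpoint cases must be excluded: for $T=[n]$ the correlator is the genuine full correlator and need not vanish, and for $T=\emptyset$ it is the normalization.
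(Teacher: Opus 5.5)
Your proof is correct and follows the paper's argument essentially verbatim. You split the sum over fibers, use Lemma~\ref{lem:product} to pull out the constant $\gamma$-sign, and reduce to a character sum over the coset $A_P = \alpha_0 \oplus H_0$, which vanishes because $\chi_T$ is nontrivial on $H_0$ whenever $\emptyset \ne T \subsetneq [n]$; your explicit witness $h = e_i \oplus e_j$ (with $i \in T$, $j \notin T$) simply makes concrete the paper's remark that only $\chi_\emptyset$ and $\chi_{[n]}$ are trivial on the parity hyperplane.
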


\begin{proof}
By Lemma~\ref{lem:product}, the summand within the fiber of class $(P,\gamma)$ carries the constant sign $(-1)^{\sum_{T \cap S} \gamma_i}$ times the character value $\chi_T(\alpha) := (-1)^{\sum_{i \in T} \alpha_i}$. Summing over the fiber sums $\chi_T$ over the coset $A_P = \alpha_0 \oplus H_0$:
$\sum_{\alpha \in A_P} \chi_T(\alpha) = \chi_T(\alpha_0) \sum_{h \in H_0} \chi_T(h)$,
which vanishes unless $\chi_T$ restricts trivially to $H_0$. The characters trivial on the parity hyperplane are exactly $\chi_\emptyset$ and $\chi_{[n]}$; since $T$ is proper and nonempty, the fiber sum is zero, and hence so is the full correlator sum.
\end{proof}

\begin{lemma}[Non-expansive uniformization]\label{lem:nonexpansive}
Define the fiber-uniformization $U\rho := |H_0|^{-1} \sum_{h \in H_0} \rho \circ g_h^{-1}$, applied simultaneously to all $\rho_s$. Then $U$ preserves normalization and every full correlator, produces fiber-uniform densities, and does not increase any pairwise $L_1$ distance:
$\| U\rho_s - U\rho_{s'} \|_1 \le \| \rho_s - \rho_{s'} \|_1$.
\end{lemma}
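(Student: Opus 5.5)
The plan is to verify the four claims of Lemma~\ref{lem:nonexpansive} one at a time. Each reduces to the fact, from Lemma~\ref{lem:fiber}, that every $g_h$ with $h\in H_0$ is a bijection of the strategy set $\mathrm{GF}(2)^n\times\mathrm{GF}(2)^n$ that fixes each class $c(\lambda)$.

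\emph{Normalization.} For each $h$, $\sum_\lambda \rho(g_h^{-1}\lambda)=\sum_\lambda\rho(\lambda)$, because $g_h^{-1}$ is a bijection. Averaging $|H_0|$ such equal sums and dividing by $|H_0|$ preserves total mass. Nonnegativity is also preserved, so $U\rho_s$ is again a probability density.

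\emph{Full correlators.} By Lemma~\ref{lem:product}, the full-correlator integrand for setting $S$ is $f_S(\lambda)=(-1)^{P\oplus\bigoplus_{i\in S}\gamma_i}$. This is a function of the class alone, so $f_S\circ g_h=f_S$. Substituting $\lambda=g_h\mu$ gives
\[
\sum_\lambda \rho(g_h^{-1}\lambda)f_S(\lambda)=\sum_\mu\rho(\mu)f_S(g_h\mu)=\sum_\mu\rho(\mu)f_S(\mu),
\]
and averaging over $h$ leaves every full correlator $E(s)$ unchanged.

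\emph{Fiber-uniformity.} Fix $h'\in H_0$. Then
\[
(U\rho)\circ g_{h'}^{-1}=|H_0|^{-1}\sum_h\rho\circ g_h^{-1}\circ g_{h'}^{-1}=|H_0|^{-1}\sum_h\rho\circ g_{h\oplus h'}^{-1},
\]
which is a reindexing of the same sum, so it equals $U\rho$. Thus $U\rho$ is $H_0$-invariant. Since $H_0$ acts transitively on each fiber (Lemma~\ref{lem:fiber}), $U\rho$ is constant on every fiber. Equivalently, $U\rho(\lambda)$ is the average of $\rho$ over the fiber of $\lambda$.

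\emph{Non-expansiveness.} This is the only inequality in the lemma. Because $U$ is applied with the same group average to every setting simultaneously, it is linear, and $U\rho_s-U\rho_{s'}=U(\rho_s-\rho_{s'})$. The triangle inequality for the $L_1$ norm then gives
\[
\|U\delta\|_1\le|H_0|^{-1}\sum_h\|\delta\circ g_h^{-1}\|_1=\|\delta\|_1,
\]
where the equality holds because composing with a bijection preserves $\|\cdot\|_1$. A more concrete route gives the same bound: on each fiber, the absolute value of the average is at most the average of the absolute values.

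It follows that $M$, being a maximum of pairwise distances, does not increase under $U$. Combined with Lemma~\ref{lem:annihilation}, every optimum of the original program maps to a feasible faithful optimum. I expect no genuine obstacle here. The one point needing care is that $U$ must be applied with the same group average to all $\rho_s$ simultaneously, so that linearity applies to the difference of two densities.
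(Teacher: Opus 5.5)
Your proposal is correct and follows essentially the same route as the paper: bijectivity of each $g_h$ gives normalization and $L_1$-isometry, class-invariance of the full-correlator integrand gives correlator preservation, transitivity on fibers gives fiber-uniformity, and the triangle inequality applied to the group average gives non-expansiveness. Your explicit reindexing $h \mapsto h \oplus h'$ to show $H_0$-invariance of $U\rho$ simply spells out a step the paper leaves implicit.
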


\begin{proof}
Each $g_h$ is a bijection of the strategy set, so $\rho \mapsto \rho \circ g_h^{-1}$ preserves normalization and $L_1$ distances; full correlators are class functions (Lemma~\ref{lem:product}) and classes are $g_h$-invariant (Lemma~\ref{lem:fiber}), so full correlators are preserved term by term. Fiber-uniformity of $U\rho$ follows from the transitivity in Lemma~\ref{lem:fiber}. The distance bound is the triangle inequality applied to the average:
$\| U\rho_s - U\rho_{s'} \|_1 \le |H_0|^{-1} \sum_h \| \rho_s \circ g_h^{-1} - \rho_{s'} \circ g_h^{-1} \|_1 = \| \rho_s - \rho_{s'} \|_1$.
\end{proof}

\begin{proof}[Proof of Theorem~\ref{thm:reduction}]
\emph{(Class solution $\Rightarrow$ faithful strategy solution, equal objective.)} Given class densities $\tilde{\rho}_s(c) \ge 0$ satisfying normalization and every full-correlator constraint
$\sum_c \tilde{\rho}_s(c)\, (-1)^{P \oplus \bigoplus_{i \in S}\gamma_i} = E(s)$,
define the lift $\rho_s(\lambda) := \tilde{\rho}_s(c(\lambda)) / 2^{n-1}$. It is normalized, reproduces every full correlator (class functions), and annihilates every proper marginal by Lemma~\ref{lem:annihilation}; its pairwise $L_1$ distances equal those of $\tilde\rho$ because within each fiber the integrand $|\rho_s - \rho_{s'}|$ is constant.

\emph{(Faithful strategy solution $\Rightarrow$ class solution, objective not larger.)} Given any faithful $\{\rho_s\}$, apply $U$ (Lemma~\ref{lem:nonexpansive}): the result is feasible, fiber-uniform---hence of the lifted form for the class marginals $\tilde\rho_s(c) := \sum_{\lambda \in c} U\rho_s(\lambda)$---and its objective is no larger. The marginal constraints on $\tilde\rho$ are vacuous, being automatic under the lift.

The two directions establish equality of the two optima, i.e., the theorem.
\end{proof}

\section{Proof of Theorem~\ref{thm:symmetry} (lossless symmetrization)}\label{app:symmetry}

Work at class level (Theorem~\ref{thm:reduction}): variables $\tilde\rho_s(c)$ with $c = (P, \gamma)$, one density per Mermin setting, identified with its $Y$-subset $S$. The symmetric group $S_n$ acts on parties, hence simultaneously on settings ($S \mapsto \pi S$) and classes ($\gamma \mapsto \pi\gamma$, $P$ fixed). Since the target \eqref{eq:targets} depends only on $|S|$, the constraint set is carried to itself: the image of a feasible point under
$(\pi \cdot \tilde\rho)_{\pi S}(P, \pi\gamma) := \tilde\rho_S(P, \gamma)$
is feasible, with equal objective (the objective is a maximum of $L_1$ norms, each preserved by the relabeling bijections).

The feasible set is convex (all constraints are affine) and the objective $\max_{\text{pairs}} \| \cdot \|_1$ is convex. Let $\tilde\rho^\ast$ be any optimum and set $\bar\rho := (n!)^{-1} \sum_{\pi \in S_n} \pi \cdot \tilde\rho^\ast$. By convexity $\bar\rho$ is feasible, and by convexity of the objective its value does not exceed the optimum; hence $\bar\rho$ is an optimum, and by construction it is covariant: $\bar\rho_{\pi S}(P, \pi\gamma) = \bar\rho_S(P, \gamma)$ for all $\pi$.

\emph{Completeness of the invariants.} A covariant family is determined by the values $\bar\rho_S(P, \gamma)$ on one representative $(S, \gamma)$ per simultaneous orbit. Two pairs $(S, \gamma)$ and $(S', \gamma')$ lie in the same orbit iff $|S| = |S'|$, $|\gamma \cap S| = |\gamma' \cap S'|$, and $|\gamma \cap \bar{S}| = |\gamma' \cap \bar{S}'|$: given equal counts, choose any bijection $S \to S'$ matching $\gamma$-support within $S$ and any bijection $\bar S \to \bar{S}'$ matching support within the complements, and extend to a permutation of $[n]$. Together with the invariant $P$, the quadruple $(|S|, P, |\gamma \cap S|, |\gamma \cap \bar S|)$ therefore separates orbits exactly, as claimed.

\emph{Variable count.} For each admissible even size $k = |S|$, the free values are indexed by $P \in \{0,1\}$, $0 \le |\gamma \cap S| \le k$, and $0 \le |\gamma \cap \bar S| \le n - k$, i.e., $2(k+1)(n-k+1)$ values; summing over the $O(n)$ sizes gives $O(n^3)$ variables in place of the $2^{n-1} \cdot 2^{n+1} = O(4^n)$ class-level variables. \hfill\qedsymbol

\section{Proof of Theorem~\ref{thm:lowerbound} (overlap bound)}\label{app:overlap}

Let $S'$ be a set of $N := |S'|$ settings whose targets are deterministic ($E = \pm 1$), and let $m := m(S')$ be the maximum, over classes $c$, of the number of constraints in $S'$ satisfied by $c$. For normalized densities, write the overlap of a pair as
$\omega(s, s') := \sum_c \min\!\big(\tilde\rho_s(c), \tilde\rho_{s'}(c)\big)$,
so that $\| \tilde\rho_s - \tilde\rho_{s'} \|_1 = 2\big(1 - \omega(s,s')\big)$ and the pairwise surrendered fraction is $1 - \omega(s,s')$.

\begin{lemma}[Support restriction]\label{lem:support}
In any faithful model, $\tilde\rho_s$ is supported on the satisfying set $A_s := \{c : c \text{ satisfies } s\}$.
\end{lemma}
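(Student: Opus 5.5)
The plan is to prove Lemma~\ref{lem:support} directly from the deterministic target together with the nonnegativity of the densities, working at class level as Theorem~\ref{thm:reduction} allows. A faithful strategy-level model can first be replaced by its fiber-uniformization via Lemma~\ref{lem:nonexpansive}; that is harmless, and in any case the argument below applies verbatim at strategy level, since the full correlator is a class function by Lemma~\ref{lem:product}.

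Fix a setting $s$ with $Y$-subset $S$ and deterministic target $E(s) = \pm 1$. By Lemma~\ref{lem:product}, each class $c = (P,\gamma)$ produces the full product $f_s(c) := (-1)^{P \oplus \bigoplus_{i\in S}\gamma_i} \in \{\pm 1\}$. By definition, $c$ satisfies $s$ exactly when $f_s(c) = E(s)$, so $E(s) f_s(c) = 1$ on $A_s$ and $E(s) f_s(c) = -1$ off $A_s$. Multiply the faithfulness constraint $\sum_c \tilde\rho_s(c) f_s(c) = E(s)$ by $E(s)$, using $E(s)^2 = 1$, and combine it with normalization $\sum_c \tilde\rho_s(c) = 1$. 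This gives
\begin{equation*}
1 = \sum_{c \in A_s} \tilde\rho_s(c) - \sum_{c \notin A_s} \tilde\rho_s(c) = 1 - 2\sum_{c \notin A_s} \tilde\rho_s(c).
\end{equation*}
Hence $\sum_{c\notin A_s}\tilde\rho_s(c) = 0$. Since every term is nonnegative, each one vanishes, which is exactly the support claim.

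The only point needing care is the strategy-level statement. There the same computation shows that $\rho_s(\lambda) = 0$ for every $\lambda$ whose class lies outside $A_s$, because the per-strategy product depends on $\lambda$ only through $c(\lambda)$. The class marginal $\tilde\rho_s(c) = \sum_{\lambda \in c}\rho_s(\lambda)$ therefore vanishes off $A_s$ as well, with no uniformization needed. I expect no real obstacle here; the lemma is essentially the observation that an average of $\pm1$ values can equal $\pm1$ only if every contributing value agrees with it.
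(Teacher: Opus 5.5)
Your proof is correct and is essentially the paper's argument: a probability-weighted average of $\pm1$ values can equal $E(s)=\pm1$ only if no mass sits where the value disagrees, and you make this explicit through normalization and nonnegativity. Your observation that the strategy-level statement follows directly, because the full product is a class function, is a valid addition that the paper leaves implicit.
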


\begin{proof}
The full correlator is $\sum_c \tilde\rho_s(c)\, \sigma_s(c)$ with $\sigma_s(c) \in \{\pm 1\}$ and target $E(s) = \pm 1$. Since $|\sigma_s| = 1$ and $\tilde\rho_s$ is a probability density, the constraint $\sum_c \tilde\rho_s(c) \sigma_s(c) = E(s)$ holds iff $\tilde\rho_s(c) = 0$ whenever $\sigma_s(c) \ne E(s)$.
\end{proof}

\begin{lemma}[Pointwise Chebyshev step]\label{lem:chebyshev}
Fix a class $c$ and let $v_1 \le \cdots \le v_k$ be the nonzero values among $\{\tilde\rho_s(c)\}_{s \in S'}$. Then
\begin{equation}
\sum_{1 \le a < b \le k} \min(v_a, v_b) \;\le\; \frac{k-1}{2} \sum_{a=1}^{k} v_a .
\end{equation}
\end{lemma}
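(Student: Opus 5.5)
Since the values are sorted, $\min(v_a,v_b)=v_a$ whenever $a<b$, so the plan is to rewrite the left-hand side as a weighted sum and compare weights. Each $v_a$ is the smaller element of exactly the pairs $(a,b)$ with $b>a$, and there are $k-a$ of them. Hence
\begin{equation}
\sum_{1 \le a < b \le k} \min(v_a, v_b) \;=\; \sum_{a=1}^{k} (k-a)\, v_a .
\end{equation}
The inequality then reduces to
\begin{equation}
\sum_{a=1}^{k} \Big(\tfrac{k-1}{2} - (k-a)\Big) v_a \;=\; \sum_{a=1}^{k} \Big(a - \tfrac{k+1}{2}\Big) v_a \;\ge\; 0 .
\end{equation}

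This is exactly Chebyshev's sum inequality for two similarly ordered sequences. Both $x_a := a$ and $v_a$ are nondecreasing, so
\begin{equation}
\sum_a x_a v_a \;\ge\; \frac{1}{k}\Big(\sum_a x_a\Big)\Big(\sum_a v_a\Big) \;=\; \frac{k+1}{2}\sum_a v_a ,
\end{equation}
which is the displayed inequality. To avoid citing Chebyshev, I will instead pair the index $a$ with $k+1-a$. The combined term is $\big(a-\tfrac{k+1}{2}\big)(v_a - v_{k+1-a})$, and it is nonnegative because both factors have the same sign by monotonicity. When $k$ is odd, the middle index has coefficient zero.

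The edge cases are immediate. For $k=0$ or $k=1$ the left-hand side is an empty sum and the right-hand side is nonnegative. Restricting to the nonzero values loses nothing, because any pair involving a zero contributes $\min=0$.

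There is no real obstacle here. The only care needed is tie-breaking when values are equal: the identity $\min(v_a,v_b)=v_a$ for $a<b$ holds under any fixed sorted order, so ties are harmless.
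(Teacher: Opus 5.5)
Your proposal is correct and takes the paper's route: both rewrite the left side as $\sum_a (k-a)v_a$ and then apply Chebyshev's sum inequality (the paper pairs the decreasing weights $k-a$ with the increasing $v_a$, which is equivalent to your similarly ordered form with $x_a=a$). Your pairing of $a$ with $k+1-a$ is a valid self-contained proof of that Chebyshev instance, and your separate treatment of $k\le 1$ also covers $k=0$, where the factor $1/k$ in Chebyshev is undefined.
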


\begin{proof}
With the values in ascending order, $\min(v_a, v_b) = v_{\min(a,b)}$, so the left side equals $\sum_a (k - a)\, v_a$. The coefficient sequence $(k-a)_a$ is decreasing while $(v_a)_a$ is increasing; Chebyshev's sum inequality for oppositely ordered sequences gives
$\sum_a (k-a) v_a \le \tfrac{1}{k} \big(\textstyle\sum_a (k-a)\big) \big(\sum_a v_a\big) = \tfrac{k-1}{2} \sum_a v_a$.
\end{proof}

\begin{proof}[Proof of Theorem~\ref{thm:lowerbound}]
By Lemma~\ref{lem:support}, at any class $c$ the number $k$ of nonzero densities among $S'$ satisfies $k \le m$, since a density can be nonzero at $c$ only if $c$ satisfies its constraint. Summing Lemma~\ref{lem:chebyshev} over classes and using $k - 1 \le m - 1$ and normalization,
\begin{align}
\sum_{\text{pairs}} \omega(s, s')
&= \sum_c \sum_{\text{pairs}} \min\!\big(\tilde\rho_s(c), \tilde\rho_{s'}(c)\big) \nonumber\\
&\le \frac{m-1}{2} \sum_c \sum_{s \in S'} \tilde\rho_s(c)
= \frac{(m-1)N}{2}.
\end{align}
There are $N(N-1)/2$ pairs, so some pair has $\omega \le (m-1)/(N-1)$, i.e., surrendered fraction at least $1 - (m-1)/(N-1) = (N - m)/(N - 1)$. Since $F$ is the maximum pairwise surrendered fraction, $F \ge (N-m)/(N-1)$ for every faithful model, which is \eqref{eq:overlapbound}; maximizing over subsets $S'$ gives the bound $B$.
\end{proof}

\section{Proof of Theorem~\ref{thm:n4} ($n = 4$ floor)}\label{app:n4}

Label the eight Mermin settings by their $Y$-subsets: the even subsets of $[4]$, namely $\emptyset$, the six pairs $\{ij\}$, and $[4]$ itself. At class level, the constraint of setting $S$ reads $f_S(c) = t_S$ with $f_S(c) := P \oplus \bigoplus_{i \in S} \gamma_i$ and targets $t_\emptyset = t_{[4]} = 0$, $t_{\{ij\}} = 1$ (the $\mathrm{GF}(2)$ encoding of \eqref{eq:targets}).

\begin{lemma}[Contradiction catalog]\label{lem:catalog}
Call a subset $C$ of settings a \emph{parity contradiction} if $\sum_{S \in C} f_S \equiv 0$ as a functional while $\sum_{S \in C} t_S \equiv 1$. There is no contradiction of size $\le 3$ containing distinct settings, and the following eight sets of size four are contradictions:
\begin{itemize}
\item[(i)] $\{\emptyset, \{ij\}, \{ik\}, \{jk\}\}$ for each $3$-subset $\{i,j,k\} \subset [4]$ (four sets);
\item[(ii)] $\{[4], \{ij\}, \{ik\}, \{il\}\}$ for each common element $i \in [4]$ (four sets).
\end{itemize}
Each setting belongs to exactly four of the eight.
\end{lemma}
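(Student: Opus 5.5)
The plan is to work entirely in the $\mathrm{GF}(2)$ encoding. A subset $C$ of settings gives the functional
\[
\sum_{S\in C} f_S = |C|\,P \oplus \bigoplus_i \gamma_i \cdot \#\{S\in C : i\in S\} .
\]
This vanishes identically iff $|C|$ is even and every party $i$ lies in an even number of members of $C$. Equivalently, the characteristic vectors $(1, \mathbf{1}_S) \in \mathrm{GF}(2)^{5}$ of the members of $C$ must sum to zero. The target sum is simply the number of pair-settings $\{ij\}$ in $C$, taken mod 2. So a parity contradiction is exactly a multiset-free even-size subset $C$ whose symmetric difference of $Y$-subsets is empty, and which contains an odd number of two-element subsets.

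\emph{No small contradictions.} I would first observe that the size must be even, so only size $2$ needs excluding; size $1$ is excluded too, since $f_S$ always contains $P$. For size $2$, the condition $\mathbf{1}_S \oplus \mathbf{1}_{S'} = 0$ forces $S = S'$, which is impossible for distinct settings. Size $3$ is odd, so the $P$ coefficient equals $1$ and the functional is nonzero.

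\emph{Verifying the eight sets.} Both families have size four, so the $P$ coefficient cancels. For (i), $\emptyset \triangle \{ij\} \triangle \{ik\} \triangle \{jk\} = \emptyset$, since each of $i,j,k$ appears twice. The target sum is $0+1+1+1 = 3 \equiv 1$. For (ii), $[4] \triangle \{ij\} \triangle \{ik\} \triangle \{il\}$ has $i$ appearing four times and each of $j,k,l$ appearing twice, so the symmetric difference is empty. The target sum is $0+1+1+1 \equiv 1$. These are direct checks.

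\emph{Incidence count.} The empty setting $\emptyset$ lies in all four type-(i) sets and in none of type (ii); symmetrically, $[4]$ lies in exactly the four type-(ii) sets. A pair $\{ij\}$ lies in type-(i) sets for the 3-subsets containing $\{i,j\}$, of which there are $2$ (adding either remaining element). It lies in type-(ii) sets for common element $i$ or $j$, which gives $2$ more. That totals $4$.

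The only mildly delicate point is the first claim. There one has to note that contradictions are defined on distinct settings and that the $P$ coordinate forces even size; beyond that the proof is bookkeeping.
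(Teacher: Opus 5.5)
Your proposal is correct and follows the paper's proof essentially step for step. It uses the same reduction of $\sum_{S\in C} f_S$ to $|C|\,P \oplus \bigoplus_{i\in\triangle C}\gamma_i$, which vanishes iff $|C|$ is even and the symmetric difference is empty; the same parity and distinctness exclusion of sizes $\le 3$; the same direct checks of the two families; and the same incidence count (your explicit remark that the target sum is the parity of the number of pair-settings in $C$ is only a cosmetic repackaging of what the paper uses implicitly).
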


\begin{proof}
$\sum_C f_S = |C| \cdot P \oplus \bigoplus_{i \in \triangle C} \gamma_i$, where $\triangle C$ is the symmetric difference of the $Y$-subsets; the functional vanishes iff $|C|$ is even and $\triangle C = \emptyset$. Size $2$ would force two settings with equal $Y$-subsets, impossible for distinct settings; odd sizes leave the $P$ term. For (i), the three pairs within $\{i,j,k\}$ have symmetric difference $\emptyset$ and target sum $0 + 1 + 1 + 1 = 1$. For (ii), the three pairs through $i$ have symmetric difference $[4]$, canceling the $[4]$ setting, again with target sum $1$. Membership count: $\emptyset$ lies in the four sets of type (i); $[4]$ in the four of type (ii); a pair $\{ij\}$ lies in the two type-(i) sets indexed by $\{i,j,k\}$ and $\{i,j,l\}$ and the two type-(ii) sets indexed by $i$ and $j$.
\end{proof}

\begin{lemma}[Maximum satisfiability]\label{lem:maxsat}
Every class violates at least two of the eight constraints, and some class violates exactly two; hence $m = 6$ and $s = 3/4$.
\end{lemma}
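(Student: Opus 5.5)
The lower bound (every class violates at least two constraints) follows from the contradiction catalog by double counting; the upper bound comes from an explicit class.

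\emph{Lower bound.} Let $V(c)$ be the set of settings violated by class $c$. By Lemma~\ref{lem:catalog}, each of the eight size-four sets is a parity contradiction. For such a set $C$, the functionals sum to zero and the targets sum to $1$. Hence $\sum_{S\in C}(f_S(c)\oplus t_S)=1$ for every $c$, so every class violates an \emph{odd} number, and in particular at least one, of the constraints in each $C$.

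Now suppose some class violated at most one constraint overall. It cannot violate none, since each contradiction needs a violation. So it violates exactly one setting $S_0$. By the membership count in Lemma~\ref{lem:catalog}, $S_0$ lies in only four of the eight contradictions. The other four contradictions contain no violated setting, which is impossible. Therefore every class violates at least two constraints, giving $m\le 6$.

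The same counting constrains which pairs can be missed. If $|V(c)|=2$, the two violated settings must together meet all eight contradictions, each an odd number of times. Since odd here means exactly once, no contradiction may contain both. Because each setting lies in exactly four contradictions, this requires the two settings' contradiction-sets to be complementary. This is the ``complementary pair'' structure quoted in Theorem~\ref{thm:n4}.

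\emph{Upper bound.} To show $m=6$ is attained, I would exhibit an explicit class. Take $P=0$, so $f_\emptyset=0=t_\emptyset$ and the $\emptyset$ constraint is satisfied. The pair constraints read $\gamma_i\oplus\gamma_j=1$, and the $[4]$ constraint reads $\bigoplus_i\gamma_i=0$.

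Choose $\gamma=(1,0,0,0)$. Then the three pairs containing $1$ are satisfied, the three pairs among $\{2,3,4\}$ are violated, and $[4]$ is violated. That gives four violations, which is too many.

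Instead choose $\gamma=(1,1,0,0)$. Then $\{13\},\{14\},\{23\},\{24\}$ are satisfied, $\{12\}$ and $\{34\}$ are violated, and $\bigoplus_i\gamma_i=0$, so $[4]$ is satisfied. Together with $\emptyset$, this class violates exactly two constraints, $\{12\}$ and $\{34\}$, and satisfies six. As a consistency check, $\{12\}$ and $\{34\}$ are complementary in the sense above. Their contradictions are types (i) $\{1,2,3\},\{1,2,4\}$ and (ii) $1,2$ for $\{12\}$, versus (i) $\{1,3,4\},\{2,3,4\}$ and (ii) $3,4$ for $\{34\}$.

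Hence $m=6$ and $s=6/8=3/4$.

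\emph{Main obstacle.} There is little difficulty beyond correctly applying the parity argument. The point needing care is the step from ``each contradiction is violated an odd number of times'' to ``at least one violation per contradiction,'' combined with the membership count, which rules out a single violation. Everything else is direct verification.
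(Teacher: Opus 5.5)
Your proof is correct and is essentially the paper's argument: each catalogued contradiction must contain a violated constraint, a lone violated setting lies in only four of the eight, and the class $P=0$, $\gamma=(1,1,0,0)$ violates exactly $\{12\}$ and $\{34\}$. Your odd-parity refinement and the aside on disjoint contradiction-sets are more than this lemma needs; note that the aside becomes the paper's $\{S,\bar S\}$ structure only after the further check in Lemma~\ref{lem:groundstates} that non-complementary settings share a contradiction.
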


\begin{proof}
A class satisfying all constraints in a contradiction would contradict Lemma~\ref{lem:catalog}, so every contradiction contains a violated constraint, ruling out zero violations. If a class violated exactly one constraint $S^\ast$, all eight contradictions would have to contain $S^\ast$; but $S^\ast$ belongs to only four. Hence at least two violations. The class $P = 0$, $\gamma = (1,1,0,0)$ violates exactly $\{12\}$ and $\{34\}$ (direct check of the eight parities), so $m = 6$.
\end{proof}

\begin{lemma}[Ground-state classification]\label{lem:groundstates}
If a class violates exactly two constraints, the two violated settings form a complementary pair $\{S, \bar{S}\}$. Each of the four complementary pairs---$(\emptyset, [4])$, $(\{12\},\{34\})$, $(\{13\},\{24\})$, $(\{14\},\{23\})$---is the violation set of exactly two classes, giving eight ground-state classes in all.
\end{lemma}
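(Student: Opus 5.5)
The plan has two parts. First I show that any two-violation set must be a complementary pair; then I count the classes attached to each pair by direct linear algebra over $\mathrm{GF}(2)$.

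\emph{Step 1 (the two violated settings are complementary).} Suppose a class $c$ violates exactly the two settings $\{S_1, S_2\}$. Every contradiction of Lemma~\ref{lem:catalog} must contain a violated constraint, since otherwise $c$ would satisfy a whole contradiction. So the eight sets of types (i) and (ii) must each meet $\{S_1,S_2\}$.

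Each setting lies in exactly four of the eight sets. Covering all eight therefore forces the four sets containing $S_1$ and the four containing $S_2$ to be disjoint: no catalog set contains both. I then check, case by case, which pairs share no catalog set:
\begin{itemize}
\item[(a)] $\emptyset$ shares a type-(i) set with every pair $\{ij\}$, but shares none with $[4]$.
\item[(b)] $[4]$ shares a type-(ii) set with every pair $\{ij\}$.
\item[(c)] Two pairs meeting in an element $i$ lie together in the type-(ii) set indexed by $i$.
\item[(d)] Disjoint pairs $\{ij\},\{kl\}$ never lie together in any catalog set. A type-(i) set lives inside a $3$-subset, which cannot contain both; a type-(ii) set has all its pairs through one common element.
\end{itemize}
Hence the non-cooccurring pairs are exactly the four complementary pairs.

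\emph{Step 2 (each complementary pair has exactly two ground-state classes).} Fix a complementary pair $\{S,\bar S\}$. The classes violating exactly these two constraints solve the eight affine equations
\[
f_T(c) = t_T \oplus [\,T \in \{S,\bar S\}\,]
\]
in the five unknowns $(P,\gamma)\in\mathrm{GF}(2)^5$. The eight functionals $f_T$ span a space of dimension $4$; this is the content of the even-size, empty-symmetric-difference characterization of relations in Lemma~\ref{lem:catalog}. So when the system is consistent, its solution set is a coset of the $1$-dimensional common kernel of the $f_T$.

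That kernel is spanned by $P=0$, $\gamma=(1,1,1,1)$. Indeed every $f_T$ with $|T|$ even vanishes on it, and no other vector kills all the $f_T$: $f_\emptyset$ forces $P=0$, and $f_{\{ij\}}$ forces $\gamma_i=\gamma_j$ for every pair. Consequently each consistent pair contributes exactly two classes.

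It remains to show all four pairs are consistent. Lemma~\ref{lem:maxsat} already supplies a witness for $(\{12\},\{34\})$. The $S_4$ permutation symmetry of Theorem~\ref{thm:symmetry} transports that witness to $(\{13\},\{24\})$ and $(\{14\},\{23\})$. For $(\emptyset,[4])$ I exhibit one directly. Take $P=1$; then $f_\emptyset = 1$ violates its target $0$. Choose $\gamma$ with $\gamma_i\oplus\gamma_j = 0$ for every pair, i.e.\ $\gamma$ constant, so that each $f_{\{ij\}}=1$ meets its target. Finally, $f_{[4]} = 1 \oplus \bigoplus\gamma_i = 1$ for constant $\gamma$, which violates the target $0$. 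So $\gamma=(0,0,0,0)$ or $(1,1,1,1)$ with $P=1$ gives the two classes. In total there are $4\times 2 = 8$ ground-state classes.

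The main obstacle is Step 1's exhaustive co-occurrence check and getting the rank-$4$ claim right. Both are finite, but an error there would miscount the fibers. I would cross-check the count by verifying that the eight explicit classes are distinct.
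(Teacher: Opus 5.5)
Your proof is correct and follows the paper's argument. Step 1 is the paper's covering-plus-disjointness check against the contradiction catalog; the only difference is that for two pairs sharing an element you cite the type-(ii) set indexed by that element, where the paper cites the type-(i) set on their union, and both work. In Step 2 the paper solves the six satisfied parities explicitly for $(\emptyset,[4])$ and $(\{12\},\{34\})$ and relabels. You instead note that the common kernel of the eight functionals is spanned by $(P;\gamma)=(0;1111)$, which gives the count of two uniformly and shows that the two classes of each pair differ by a global flip of $\gamma$ --- a minor but tidy variant.
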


\begin{proof}
Two violated settings must jointly cover all eight contradictions; since each covers exactly four, their contradiction sets must be disjoint. If two distinct settings are \emph{not} complementary, they share a contradiction: $\emptyset$ and a pair $\{jk\}$ share the type-(i) set on any $3$-subset containing $\{j,k\}$; $[4]$ and a pair share a type-(ii) set; two pairs sharing an element $i$ share the type-(i) set on their union. Two disjoint pairs and the pair $(\emptyset, [4])$ are precisely the complementary pairs (in $n=4$, disjoint pairs are complements), and a direct check shows the four type-(i) sets and four type-(ii) sets sort disjointly between the members of each complementary pair. Hence the violated pair is complementary. Solving the six remaining satisfied parities for each pair: for $(\emptyset, [4])$, the conditions force $P = 1$ and all $\gamma_i$ equal, giving $(1; 0000)$ and $(1; 1111)$; for $(\{12\},\{34\})$, they force $P = 0$, $\gamma_1 = \gamma_2 \ne \gamma_3 = \gamma_4$, giving $(0; 1100)$ and $(0; 0011)$; the remaining pairs follow by relabeling. That is two classes per pair.
\end{proof}

\begin{proof}[Proof of Theorem~\ref{thm:n4}]
\emph{Upper bound.} For each setting $S$, let $\tilde\rho_S$ be uniform (weight $1/6$ each) on the six ground-state classes whose violated pair does not contain $S$---equivalently, uniform on the ground states of the frustration Hamiltonian within the sector satisfying $S$. Two settings in the same complementary pair receive identical densities (overlap $1$); two settings in different pairs share the ground states of the two pairs containing neither, i.e., four common classes of weight $1/6$, for overlap $2/3$ and surrendered fraction $1/3$. Hence $F = 1/3$, marginals vanishing by the fiber lift of Theorem~\ref{thm:reduction}.

\emph{Lower bound.} Apply Theorem~\ref{thm:lowerbound} to the type-(i) contradiction $S' = \{\emptyset, \{12\}, \{13\}, \{23\}\}$: it is unsatisfiable in full ($m(S') \le 3$) and three of its constraints are simultaneously satisfiable (e.g., the class $P = 0$, $\gamma = (1, 0, 1, 0)$ satisfies $\emptyset$, $\{12\}$, $\{23\}$), so $m(S') = 3$ and $F \ge (4 - 3)/(4 - 1) = 1/3$.

The bounds meet at $1/3$.
\end{proof}

\section{Fourier spectrum of the frustration Hamiltonian}\label{app:fourier}

Encode a class $c = (P, \gamma)$ by the $\pm 1$ variables $x_0 := (-1)^P$ and $x_i := (-1)^{\gamma_i}$, and for each even $Y$-subset $S$ define the character
$\chi_S(c) := x_0 \prod_{i \in S} x_i$,
a multilinear monomial of degree $|S| + 1$. By Lemma~\ref{lem:product}, $\chi_S(c)$ is the value of the full product of outputs at setting $S$, so the constraint of setting $S$ reads $\chi_S(c) = \hat{t}_S$ with $\hat{t}_S := E(s) \in \{\pm 1\}$.

\emph{Expansion.} The frustration Hamiltonian counts violated constraints:
\begin{equation}
H(c) = \sum_S \frac{1 - \hat{t}_S\, \chi_S(c)}{2}
= 2^{n-2} - \frac{1}{2} \sum_S \hat{t}_S\, \chi_S(c),
\end{equation}
using that there are $2^{n-1}$ settings. The monomials $\chi_S$ are distinct (distinct variable sets), so by uniqueness of the multilinear expansion this \emph{is} the Fourier expansion of $H$ over $\{\pm 1\}^{n+1}$: the support is exactly the $2^{n-1}$ stabilizer characters, every nonzero coefficient has magnitude $1/2$, and the occurring degrees are $|S| + 1$ for even $|S|$, i.e., the odd degrees $1, 3, \ldots, 2\lfloor n/2 \rfloor + 1$ (equal to $n$ for odd $n$ and $n+1$ for even $n$).

\emph{Pairwise no-go.} For every $n \ge 3$ there are $\binom{n}{2} \ge 3$ settings with $|S| = 2$, so the degree-$3$ Fourier coefficients of $H$ are nonzero. A Hamiltonian built from one- and two-body interactions among the variables $(x_0, x_1, \ldots, x_n)$ has Fourier support at degree $\le 2$; by uniqueness of the expansion it cannot equal $H$. No pairwise interaction generates the landscape.

\emph{Quantum parentage.} Let $O_S := \bigotimes_{i \notin S} X_i \bigotimes_{i \in S} Y_i$ be the Mermin operator of setting $S$, so that $\hat{t}_S O_S$ stabilizes the GHZ state. Products of two such elements yield the two-body $Z$-stabilizers: for instance, $(\hat{t}_\emptyset O_\emptyset)(\hat{t}_{\{ij\}} O_{\{ij\}}) = (+1)(-1) \cdot (X_i Y_i)(X_j Y_j) \otimes \mathbb{1} = Z_i Z_j$, using $XY = iZ$. Together with $O_\emptyset = X^{\otimes n}$, the elements $\{Z_i Z_j\}$ generate the full $2^n$-element stabilizer group of GHZ$_n$, whose joint $+1$ eigenspace is one-dimensional. Hence the operator $\sum_S \tfrac{1}{2}(1 - \hat{t}_S O_S)$ has the prepared GHZ state as its unique ground state, and $H$ is its classical shadow under the substitution $X_i \mapsto a_i$, $Y_i \mapsto b_i$ of Sec.~\ref{sec:scenario}. This is the precise sense of the inheritance statement in Sec.~\ref{sec:spectrum}: the hidden-variable energy landscape is the classicalization of the Hamiltonian that uniquely selects the prepared state.

\end{document}